\theoremstyle{definition} \newtheorem{define} {Definition} [section]
\newtheorem {theorem} {Theorem}
\newtheorem {lemma} {Lemma}
\newcommand{\fullversion}[2]{\ifthenelse{\boolean{fullversionflag}}{{#1}}{{#2}}}
\newcommand{\num}{\#}
\newcommand{\kb}[1]{\left[#1\right]}
\newcommand{\al}{\mathcal{A}}
\newcommand{\trd}[1]{\left|\left| #1 \right| \right|}
\newcommand{\st}{\text{ } : \text{ }}
\newcommand{\stt}{\text{ s.t. }}
\newcommand{\Hmin}{H_\infty}
\newcommand{\leakEC}{\texttt{leak}_{EC}}
\newcommand{\up}[1]{^{({#1})}}
\newcommand{\uni}[1]{\mathcal{U}_{{#1}}}
\newcommand{\samp}{\Psi}
\newcommand{\Bell}{\mathcal{B}}
\newcommand{\bit}{\texttt{bt}}
\newcommand{\phase}{\texttt{ph}}
\begin{document}

\title{New Key Rate Bound for High-Dimensional BB84 with Multiple Basis Measurements}

\author{
  \IEEEauthorblockN{Trevor N. Thomas}
  \IEEEauthorblockA{
    \textit{University of Connecticut} \\
        School of Computing\\
        Storrs, CT USA \\
        trevor.thomas@uconn.edu
    }    
    \and
    \IEEEauthorblockN{Walter O. Krawec}
    \IEEEauthorblockA{
      \textit{University of Connecticut} \\
      School of Computing\\
        Storrs, CT USA \\
        walter.krawec@uconn.edu
    }    
}

\maketitle

\begin{abstract}
  In this paper we derive a new bound on the secret key-rate of the High Dimensional BB84 protocol operating with multiple mutually unbiased bases (MUBs).  To our knowledge, our proof is the first for this protocol that is both general (in that it can handle arbitrary, asymmetric channels), and also the first that derives a bound on the quantum min entropy for general attacks, without relying on post selection techniques or the asymptotic equipartition property.  Because of this, our new result shows that far more optimistic key-rates are possible for a low number of signals, even in general channels.  Furthermore, our proof methods may be broadly applicable to other protocols relying on multiple measurement bases and we prove several technical lemmas that may have independent interest.  We evaluate our new bound and compare to prior work, showing that higher key-rates are possible in several operating scenarios.  We also show some interesting behavior of the protocol when faced with asymmetric noise.
\end{abstract}

\section{Introduction}

Quantum key distribution (QKD) is, by now, a rich and fascinating field of study.  At its core, it allows for the establishment of a secure secret key between two parties, the security of which does not depend on any computational assumptions (which are always needed for classical key distribution and classical public key cryptography to function).  See \cite{QKD-survey,pirandola2020advances,amer2021introduction} for a general survey.  However, beyond this, its study has also led to breakthroughs in other areas of cryptography, including the development of novel cryptographic applications.  For instance, certified deletion \cite{broadbent2020quantum} is a cryptographic primitive not possible using classical data alone, but its study and security proof, follows QKD methodologies.  Other, similar, instances also exist \cite{broadbent2016quantum,bozzio2024quantum}.  This demonstrates the importance and relevance of studying QKD as the methods and proof techniques often see broader application beyond key distribution.

In this paper, we revisit a high-dimensional (HD) QKD protocol introduced originally in \cite{cerf2002security}.  HD-QKD protocols have been shown to exhibit many advantages over their two-dimensional counterparts, especially in terms of theoretical efficiency and noise tolerance \cite{bechmann2000quantum,cerf2002security,sheridan2010security,vlachou2018quantum,iqbal2021analysis}.  See also \cite{cozzolino2019high} for a survey of HD-QKD.  In the protocol we consider, Alice and Bob communicate not using two-dimensional qubits, but, instead, $d$-dimensional qudits.  Furthermore, they send, and measure, these qudits using multiple mutually unbiased bases (MUBs).  We focus on the case where $d$ is prime, and we utilize $d+1$ MUBs which generalizes the six-state BB84 protocol (which uses $d=2$ dimensional systems and $d+1=3$ bases) to higher dimensions.  In particular, when $d=2$ (i.e., the six-state BB84), we know that the noise tolerance of this protocol is $12.6\%$, which is higher than the standard $11\%$ noise tolerance of the two-basis BB84 protocol \cite{renner2005information}.  This trend (increasing noise tolerance with increasing dimension and MUBs) shows up in the HD version of this protocol in the asymptotic case \cite{sheridan2010security}, and also in the finite-key regime \cite{wyderka2025high}.

In this paper, we derive a novel finite key proof of security for this HD-QKD protocol using Bouman and Fehr's sampling framework \cite{bouman2010sampling} as a foundation, along with proof techniques we developed for sampling based entropic uncertainty relations \cite{krawec2019quantum,yao2022quantum,krawec2023entropic}.  Our new proof of security is able to bound the quantum min entropy of the system directly, for any general coherent attack, without having to rely on alternative methods such as the asymptotic equipartition property (AEP) \cite{tomamichel2009fully}, or post selection techniques \cite{christandl2009postselection}, both of which tend to lead to significantly lower key-rates when the number of signals (i.e., protocol rounds) is small, and also when the dimension of the system is large.  Our method does not suffer these draw backs.  To our knowledge we are the first to prove security of this protocol in the finite key setting by directly bounding the quantum min entropy, for any arbitrary channel or attack.  Prior work in the finite key setting has either relied on AEP and von Neumann entropy bounds \cite{wyderka2025high}, or can only apply to symmetric channels \cite{wang2021tight}.  Our proof of security results in a key-rate that can work for any arbitrary (i.e., not necessarily symmetric or depolarizing), quantum channel or attack.

We make several contributions in this work.  First, we derive a new proof of security for the HD-QKD protocol using $d+1$ MUBs, with $d$ a prime.  Our methods here may be broadly applicable to other quantum cryptographic protocols (HD or otherwise) which require measurements in more than two bases.  Our proof method, combined with methods from \cite{yao2022quantum}, may also lead to new entropic uncertain relations involving multiple basis measurements, an area of open study still \cite{wyderka2025high}.  To derive our proof, we also discover and prove several technical lemmas which may be of independent interest.  For example, our Lemma \ref{lemma:result:security-chain} can be used in combination with Bouman and Fehr's main result, to easily utilize their analysis framework for any arbitrary QKD protocol, even those with abort conditions.

Finally, we evaluate our new security proof, and resulting key-rate, in a variety of scenarios, including scenarios involving asymmetric noise.  We show that for small signal sizes (i.e., small number of rounds) and/or high dimensions, our work can outperform recent work in \cite{wyderka2025high}.  For higher number of rounds, and/or lower dimensions, prior work in \cite{wyderka2025high} outperforms our result (though the two results converge in the asymptotic limit).  However, since both results are lower-bounds on the actual key-rate, our work can readily complement prior work; users simply need to take the maximum of our result and theirs to get a valid key-rate for any scenario.  While performing these evaluations, we make several interesting discoveries on the behavior of this HD-QKD protocol under asymmetric noise which mirrors behavior discovered in \cite{murta2020key} for the two-dimensional case.

\subsection{Preliminaries}\label{sec:main:prelim}

Let $\al_d = \{0,1,\cdots, d-1\}$ and let $q \in \al_d^N$.  Given any subset $t \subset \{1,2,\cdots, N\}$, we write $q_t$ to mean the substring of $q$ indexed by $t$ and we write $q_{-t}$ to mean the substring of $q$ indexed by the complement of $t$.   If $i$ is a number between $1$ and $N$, we write $q_i$ to mean the $i$'th character of $q$ and $q_{-i}$ to mean all other characters of $q$ except $i$.  Let $\num_j(q)$ be the number of times character $j$ appears in $q$ (for $j \in \al_d$).  We denote by $w(q)$ to be the \emph{relative} Hamming weight of $q$, namely $w(q) = \frac{1}{N}|\{i \st q_i \ne 0\}| = \frac{1}{N}\sum_{j>0}\num_j(q)$.  Given a subset $t$ of size $m$, with $t=\{t_1,\cdots, t_m\}$ and $t_1<t_2<\cdots < t_m$, and a string $s\in\al_d^m$, we write $t_{s=j}$ to mean those entries in $t$ where $s_i=j$.  Namely, $t_{s=j} = \{t_i \st s_i = j\}$.

Given some $d$-dimensional orthonormal basis $\mathcal{B} = \{\ket{b_0}, \cdots, \ket{b_{d-1}}\}$, and some character $i \in \al_d$, we write $\ket{i}^{\mathcal{B}}$ (with the basis label in the superscript of a ket) to mean $\ket{b_i}$ (i.e., to mean ``$i$'' in the $\mathcal{B}$ basis).  Given a word $q \in \al_d^N$, we write $\ket{q}^{\mathcal{B}}$ to mean $\ket{q_1}^{\mathcal{B}}\otimes\cdots\otimes\ket{q_N}^{\mathcal{B}}$.  If the basis is not specified, we mean the standard computational basis.  Two bases are said to be MUB if for any pair of vectors, one from each basis, the magnitude of their inner product squared is $1/d$.

We will be working with the high dimensional Bell basis, spanned by vectors denoted:
\begin{equation}
\ket{\phi_x^y} = \frac{1}{\sqrt{d}}\sum_{a=0}^{d-1}\omega^{ay}\ket{a,a+x},
\end{equation}
where the addition is performed modulo $d$ and were $\omega = \exp(2\pi i/d)$.  Let $\Bell = \{{y\choose x} \st x,y\in\al_d\}$.  Then, given $k = {y\choose x}\in\Bell$, we write $k^\bit$ to mean the bottom coordinate (i.e., $k^\bit = x$) while we write $k^\phase$ to mean the top coordinate, $k^\phase = y$ (here $\bit$ stands for the ``bit'' portion of the Bell state while $\phase$ stands for the ``phase'' portion).  Given this, we write $\ket{\phi_k}$ to mean $\ket{\phi_{k^\bit}^{k^\phase}} = \ket{\phi_x^y}$.  Similarly, given $k\in\Bell^n$, we write $\ket{\phi_k}$ to mean $\ket{\phi_k} = \ket{\phi_{k_1}}\cdots\ket{\phi_{k_n}} = \ket{\phi_{k_1^\bit}^{k_1^\phase}}\cdots\ket{\phi_{k_n^\bit}^{k_n^\phase}}$.  All indexing rules discussed above apply in the natural way to a word $k\in\Bell^n$.  Finally, given $k\in\Bell^n$, we write $\num_\alpha^\beta(k) = \num_\alpha^\beta{k^\phase \choose k^\bit}$ to be the number of $j$ such that $k_j^\bit = \alpha$ (the bottom coordinate) and $k_j^\phase = \beta$ (the top coordinate).

Let $\rho$ be a quantum state, i.e., a \emph{density operator}, acting on some Hilbert space $\mathcal{H}_A\otimes\mathcal{H}_E$; in that case, we usually write $\rho_{AE}$ with the block subscripts indicating the underlying spaces the operator acts on.  We write $\rho_A$ to mean the result of tracing out the $E$ system.  The same notation follows for three or more systems.  To compress notation, given a pure state $\ket{\psi}$, we write $\kb{\psi}$ to mean $\ket{\psi}\bra{\psi}$.  Given two density operators $\rho$ and $\sigma$, acting on the same space, we write $\trd{\rho-\sigma}$ to be the trace distance of the two operators.

Given (classical) random variables $A$ and $E$, we write $H(A)$ to mean the Shannon entropy of $A$ and $H(A|E)$ to mean the conditional Shannon entropy of $A$ conditioned on $E$.  We write $h_d(x)$ to mean the $d$-ary entropy function, namely $h_d(x) = x\log_d(d-1) - x\log_dx - (1-x)\log_d(1-x)$.  Given a quantum state $\rho_{AE}$ (where, now, the $A$ and $E$ systems may be quantum), we write $\Hmin(A|E)_\rho$ to be the conditional quantum min entropy defined in \cite{renner2008security}. Note the subscript indicates the operator we are evaluating the entropy on ($\rho$ in this case).  If the $A$ system is classical, then it was shown in \cite{konig2009operational} that:
\begin{equation}
\Hmin(A|E)_\rho = -\log_2\max_{\mathcal{E}_a}\sum_aPr(A=a)tr(\mathcal{E}_a\rho_E^{A=a}).
\end{equation}
where the maximum is over all POVMs over Eve's ancilla.
The \emph{smooth quantum min entropy} \cite{renner2008security}, denoted $\Hmin^\epsilon(A|E)_\rho$ is defined to be $\Hmin^\epsilon(A|E)_\rho = \sup_\tau\Hmin(A|E)_\tau$, where the supremum is over all operators $\tau$, acting on the same space as $\rho$, such that $\trd{\tau-\rho}\le \epsilon$.  Here $\trd{\cdot}$ is the trace distance.

As shown in \cite{renner2008security}, quantum min entropy is a vital resource to measure in quantum cryptography as it relates to how much secure uniform randomness may be extracted from the $A$ system of a classical-quantum (cq)-state $\rho_{AE}$.  In particular, if the $A$ register consists of $n$ (classical) bits and if Alice chooses a random two universal hash function $f:\{0,1\}^n\rightarrow \{0,1\}^\ell$ and applies it to her register and broadcasts the actual hash function used (creating random variable $F$), it holds that \cite{renner2008security}:
\begin{equation}\label{eq:main:PA}
\trd{\rho_{f(A),EF} - \uni{\ell}\otimes\rho_{EF}} \le 2^{-\frac{1}{2}(\Hmin(A|E) - \ell)},
\end{equation}
where $\uni{\ell} = I/2^\ell$, namely a uniform random $\ell$-bit string.

Given a QKD protocol, we can define its security as follows:
\begin{define}\label{def:secure}
(From \cite{renner2008security}): Given $\epsilon > 0$, and let $\rho_{ABE}$ be an input state for a QKD protocol.  The QKD protocol will either output two $\ell$-bit strings $K_A$ and $K_B$ for Alice and Bob respectively, or abort (which will be modeled mathematically as a zero operator).  Let $p_{ok}\cdot\rho_{K_AK_BE} + (1-p_{ok})\cdot \mathbf{0}$ be the resulting state where $p_{ok}$ is the probability the protocol did not abort.  Then the QKD protocol is said to be \emph{$\epsilon$-secure on $\rho_{ABE}$} if:
\begin{equation}
p_{ok}\trd{\rho_{K_AE} - \uni{\ell}\otimes\rho_{E}} \le \epsilon,
\end{equation}
A QKD protocol is said to be \emph{$\epsilon$-secure} if the above is true for any input state $\rho_{ABE}$. See \cite{renner2008security} for more details on the definition and reasoning behind it.
\end{define}
Note that the above does not directly address correctness of the protocol (namely that Alice and Bob will share the same key or abort) however this is achieved through error correction (which must be taken into account when determining the information Eve has before privacy amplification) and can be verified by hashing the raw key through a small two-universal hash to test correctness (leaking additional information).  More details can be found in \cite{renner2008security,tomamichel2012tight}.  In this work we are primarily concerned with proving the security QKD protocols, relying on standard methods to ensure correctness.

Quantum min entropy has several properties we will make use of later.  First, given a state $\rho_{AEZ}$ that is classical in $Z$ (namely, $\rho_{AEZ} = \sum_zp(z)\kb{z}_Z\otimes\rho_{AE}^{(z)}$), then it holds that:
\begin{equation}\label{eq:main:min-ent-classical}
\Hmin(A|E)_\rho \ge \Hmin(A|EZ)_\rho \ge \min_z\Hmin(A|E,Z=z)_\rho,
\end{equation}
where, by $\Hmin(A|E,Z=z)$ we mean $\Hmin(A|E)_{\rho^{(z)}}$, namely the conditional entropy of the $A$ and $E$ system, conditioned on outcome $Z=z$ occurring.

The following lemma will be important later as it will allow us to bound the min entropy of a pure state based on a particular mixed state:
\begin{lemma}\label{lemma:main:entropy-super}
  (From \cite{bouman2010sampling} based on a proof in \cite{renner2008security}): Let $\ket{\psi}_{AE} = \sum_{j\in J}\alpha_j\ket{j}^X\otimes\ket{E_j}$, for $J \subset \al_d^n$, and consider the mixed state $\chi = \sum_{j\in J}|\alpha_j|^2\kb{j}^X\otimes\kb{E_j}$.  Let $\rho_{ZE}$ be the result of measuring the $A$ system of $\ket{\psi}$ in some basis $Z$ (which need not be the computational basis) and $\sigma_{ZE}$ be the same but when measuring $\chi$.  Then it holds that $\Hmin(Z|E)_\rho \ge \Hmin(Z|E)_\sigma - \log_2|J|$.
\end{lemma}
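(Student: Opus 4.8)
The plan is to reduce everything to the operational (guessing-probability) characterization of conditional min entropy recalled above — the formula $\Hmin(A|E)_\rho = -\log_2\max_{\mathcal{E}_a}\sum_a Pr(A{=}a)\,\mathrm{tr}(\mathcal{E}_a\rho_E^{A=a})$ valid when the $A$-register is classical — and to isolate a single positive-semidefinite operator inequality that carries all the content. Write $\mathcal{M}_Z$ for the channel that measures the $A$ system in the basis $Z$ and records the outcome in a classical register, so that $\rho_{ZE} = (\mathcal{M}_Z\otimes\mathrm{id}_E)(\kb{\psi})$ and $\sigma_{ZE} = (\mathcal{M}_Z\otimes\mathrm{id}_E)(\chi)$; after this channel both states are classical on $Z$, so the formula applies to each.

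The heart of the proof is the claim that, as operators on $\mathcal{H}_A\otimes\mathcal{H}_E$,
\[ \chi \ \ge\ \frac{1}{|J|}\,\kb{\psi}. \]
To prove it I would observe that the branch vectors $\ket{j}^X\otimes\ket{E_j}$, $j\in J$, are automatically mutually orthogonal — they sit in the orthogonal $\ket{j}$-sectors of $\mathcal{H}_A$ — even though the $\ket{E_j}$ themselves need not be. Restricting to their (at most $|J|$-dimensional) span and passing to the induced orthonormal coordinates, $\chi$ becomes a diagonal matrix $\mathrm{diag}(|\beta_j|^2)_{j\in J}$ and $\kb{\psi}$ becomes the rank-one matrix $(\beta_j\overline{\beta_{j'}})_{j,j'}$ for suitable scalars $\beta_j$. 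For an arbitrary vector $v=(v_j)_{j\in J}$, Cauchy--Schwarz applied to the $|J|$-term sum gives $\left|\sum_{j\in J}\overline{\beta_j}v_j\right|^2 \le |J|\sum_{j\in J}|\beta_j|^2|v_j|^2$, which is precisely $\bra{v}\kb{\psi}\ket{v}\le|J|\,\bra{v}\chi\ket{v}$; outside the span both operators vanish, so the inequality holds globally.

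Given this, the remaining steps are mechanical. Applying the positive linear map $\mathcal{M}_Z\otimes\mathrm{id}_E$ to the operator inequality yields $\sigma_{ZE}\ge\frac{1}{|J|}\rho_{ZE}$; sandwiching by $\bra{z}_Z(\cdot)\ket{z}_Z$ preserves positivity and gives, for every outcome $z$, the subnormalized-operator inequality $\hat{\sigma}^{(z)}_E\ge\frac{1}{|J|}\hat{\rho}^{(z)}_E$, where $\hat{\rho}^{(z)}_E = Pr(Z{=}z)\,\rho_E^{(z)}$ and likewise for $\sigma$. Feeding this into the guessing-probability formula, for any POVM $\{\mathcal{E}_z\}$ on Eve we get $\sum_z\mathrm{tr}(\mathcal{E}_z\hat{\sigma}^{(z)}_E)\ge\frac{1}{|J|}\sum_z\mathrm{tr}(\mathcal{E}_z\hat{\rho}^{(z)}_E)$; maximizing over $\{\mathcal{E}_z\}$ on both sides gives $2^{-\Hmin(Z|E)_\sigma}\ge\frac{1}{|J|}2^{-\Hmin(Z|E)_\rho}$, and taking $-\log_2$ and rearranging is exactly $\Hmin(Z|E)_\rho\ge\Hmin(Z|E)_\sigma-\log_2|J|$.

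The hard part — and the only non-routine step — is the operator inequality $\chi\ge\frac{1}{|J|}\kb{\psi}$, because its direction is the counterintuitive one: the coherent state $\ket{\psi}$ has at least as much extractable randomness as its decohered version $\chi$ (up to the $\log_2|J|$ loss), so no data-processing or monotonicity argument on the states themselves can produce it; one genuinely needs the structural domination, whose content is the $|J|$-term Cauchy--Schwarz bound plus the remark that the branches are orthogonal for free. An equivalent, slightly more operational route that I would keep in reserve is to purify $\chi$ as $\ket{\psi'}_{AEE'}=\sum_{j\in J}\alpha_j\ket{j}^X\otimes\ket{E_j}\otimes\ket{j}_{E'}$ with $\{\ket{j}_{E'}\}$ orthonormal, note that $\ket{\psi}_{AE}$ is the normalized post-measurement state obtained by measuring $E'$ and getting the outcome $\ket{+}_{E'}=|J|^{-1/2}\sum_j\ket{j}_{E'}$ — an event of probability $1/|J|$ — and then, since measurements on $A$ and on $E'$ commute, write $\sigma_{ZE}$ as the sum over $E'$-outcomes of the subnormalized post-measurement cq-states and discard all branches except the $\ket{+}$ one, which equals $\frac{1}{|J|}\rho_{ZE}$; positivity of the discarded branches again yields $p_{\mathrm{guess}}(Z|E)_\sigma\ge\frac{1}{|J|}p_{\mathrm{guess}}(Z|E)_\rho$ and hence the same conclusion.
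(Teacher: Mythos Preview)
The paper does not actually prove this lemma; it is stated with a citation to \cite{bouman2010sampling,renner2008security} and then used as a black box. So there is no in-paper proof to compare against, and the relevant question is simply whether your argument is correct --- it is.

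Your primary route (the operator inequality $\chi\ge\frac{1}{|J|}\kb{\psi}$ via Cauchy--Schwarz on the $|J|$ orthogonal branch vectors, then pushing through $\mathcal{M}_Z\otimes\mathrm{id}_E$ and reading off the guessing-probability comparison) is clean and fully rigorous. One cosmetic point: ``maximizing over $\{\mathcal{E}_z\}$ on both sides'' is slightly imprecise --- what you actually do is take the POVM that is optimal for $\rho$ and use it as a feasible choice for $\sigma$, which gives $p_{\mathrm{guess}}(Z|E)_\sigma\ge\frac{1}{|J|}p_{\mathrm{guess}}(Z|E)_\rho$ directly. Your ``reserve'' purification argument (extend $\chi$ by an orthonormal $E'$ register, project $E'$ onto the uniform superposition with probability $1/|J|$, and discard the other outcomes) is essentially the proof that appears in the cited references, so you have in fact recovered the original argument as well. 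The operator-inequality version is arguably more transparent because it makes explicit that the $|J|$-loss is exactly the Cauchy--Schwarz slack from the $|J|$ orthogonal branches; the purification version makes the same loss appear as a postselection probability.
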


\fullversion{The following is not difficult to show and is based on a proof in \cite{krawec2019quantum}:}
{The following is not difficult to show:}
\begin{lemma}\label{lemma:introduction:entropy-mixed}
  Let $\chi_{AE} = \sum_{i\in J}p_i\kb{i}_A^X\otimes\kb{E_i}$ for some values $p_i \ge 0$ and $\sum_ip_i=1$ where the $d$-dimensional $X=\{\ket{x_i}\}$ basis need not be the Hadamard basis and $J \subset \al_d^n$.  Assume a measurement of the $A$ system is made in some alternative $d$-dimensional basis $Z=\{\ket{z_i}\}$ (not necessarily the computational basis) resulting in random variable $Z$ and state $\sigma_{ZE}$.  Then it holds:
$    \Hmin(Z|E)_\sigma \ge n\cdot c,$
  where $c = -\max_{i,j\in\al_d}\log_2|\braket{z_i|x_j}|^2$.
\end{lemma}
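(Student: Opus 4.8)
The plan is to prove this directly from the operational (guessing-probability) characterization of conditional min entropy stated above (from \cite{konig2009operational}), which is legitimate here because the outcome register $Z$ is classical. First I would write the post-measurement state explicitly. Measuring each of the $n$ qudits of the $A$-register of $\chi_{AE}$ in the basis $Z=\{\ket{z_i}\}$ produces the cq-state $\sigma_{ZE}=\sum_{z}\kb{z}_Z\otimes\sigma_E^{(z)}$, where $z$ ranges over outcome words in $\al_d^n$ and the (sub-normalized) conditional states on Eve's system are
\[
\sigma_E^{(z)}=\sum_{i\in J}p_i\,P(z\mid i)\,\kb{E_i},\qquad P(z\mid i)=\prod_{k=1}^n|\braket{z_k|x_{i_k}}|^2,
\]
with $P(z\mid i)$ the probability of obtaining the outcome word $z$ when measuring the $X$-basis word $\ket{i}^X$ in the $Z$ basis; this factorizes over the $n$ positions since both $\ket{i}^X$ and the measurement are of product form.

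The one substantive step is an elementary bound on $P(z\mid i)$: since each of the $n$ factors satisfies $|\braket{z_k|x_{i_k}}|^2\le\max_{a,b\in\al_d}|\braket{z_a|x_b}|^2=2^{-c}$, we get $P(z\mid i)\le 2^{-nc}$ for every outcome word $z$ and every $i\in J$. (Here $c\ge 0$ always, with $c=0$ exactly when $Z$ and $X$ share a basis vector, in which case the claim is the trivial bound $\Hmin(Z|E)_\sigma\ge 0$.)

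Given this, I would finish by bounding Eve's guessing probability. For any POVM $\{\mathcal{E}_z\}_z$ on her register,
\[
\sum_z\text{tr}\!\big(\mathcal{E}_z\,\sigma_E^{(z)}\big)=\sum_{i\in J}p_i\sum_z P(z\mid i)\,\text{tr}\!\big(\mathcal{E}_z\kb{E_i}\big)\le 2^{-nc}\sum_{i\in J}p_i\sum_z\text{tr}\!\big(\mathcal{E}_z\kb{E_i}\big)=2^{-nc},
\]
where the final equality uses POVM completeness $\sum_z\mathcal{E}_z=I$ together with the normalization $\sum_{i\in J}p_i\bk{E_i}=1$ of $\chi_{AE}$. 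Taking the supremum over POVMs and applying $-\log_2$ yields $\Hmin(Z|E)_\sigma\ge n\cdot c$ exactly.

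There is no deep obstacle here — the statement is flagged as ``not difficult'' and the argument is only a few lines. The points requiring care are (i) writing the transition probabilities $P(z\mid i)$ correctly as products over the $n$ qudit positions, so that the factor $n$ genuinely appears in the exponent, and (ii) applying POVM completeness to the inner sum over outcomes $z$ rather than prematurely pulling $\max_z P(z\mid i)$ out of the wrong sum. One could alternatively purify $\chi_{AE}$ and invoke Lemma~\ref{lemma:main:entropy-super}, but that route introduces a spurious $\log_2|J|$ loss, whereas the direct guessing-probability argument above gives the clean bound with no dependence on $|J|$.
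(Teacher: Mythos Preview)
Your proof is correct. It differs from the paper's argument in the route taken, though both are short. The paper does not use the guessing-probability characterization directly; instead it appends an auxiliary classical register $I$ recording the mixture index, forming $\chi_{AEI}=\sum_{i\in J}p_i\kb{i}^X\otimes\kb{E_i}\otimes\kb{i}_I$, and then applies the conditioning inequality (Equation~\ref{eq:main:min-ent-classical}) to get $\Hmin(Z|E)_\chi\ge\min_{i\in J}\Hmin(Z|E,I=i)$. Conditioned on $I=i$ the $E$ part factors off, so $\Hmin(Z|E,I=i)=\Hmin(Z)_{\rho^{(i)}}=-\log_2\max_j P(j\mid i)\ge nc$. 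Your approach bounds the guessing probability in one line by pulling the uniform bound $P(z\mid i)\le 2^{-nc}$ through the sum and invoking POVM completeness; this avoids introducing the extra register and the conditioning inequality altogether, at the cost of appealing to the operational formula from \cite{konig2009operational}. Either way the same bound falls out, and your remark that going through Lemma~\ref{lemma:main:entropy-super} would cost an unnecessary $\log_2|J|$ is well taken.
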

\fullversion{
  \begin{proof}
  This proof is taken from arguments in \cite{krawec2019quantum}.  Let $\chi_{AEI} = \sum_{i\in J}p_i\kb{i}^X\otimes\kb{E_i}\otimes\kb{i}_I$ be a state classical in the $I$ system.  Then, measuring the $A$ system in basis $Z$ yields:
  \begin{equation}
    \chi_{ZEI} = \sum_{i\in J}p_i \sum_{j\in\al_d^n}p(j|i) \kb{j}^Z\otimes\kb{E_i} \otimes \kb{i},
  \end{equation}
  where $p(j|i) = |\braket{x_i|z_j}|^2 = \prod_{\ell=1}^n|\braket{x_{i_\ell}|z_{j_\ell}}|^2$, namely it is the probability of observing $\ket{j}^Z$ given input state $\ket{i}^X$.

  From Equation \ref{eq:main:min-ent-classical} it holds that $\Hmin(Z|E)_\chi \ge \Hmin(Z|EI)_\chi \ge \min_{i\in J}\Hmin(Z|E, I=i)_\chi$.  Looking at the conditional state when $I=i$, we have:
  \begin{equation}
    \chi_{ZE,I=i} = \sum_jp(j|i)\kb{j}^Z\otimes\kb{E_i}\otimes\kb{i}.
  \end{equation}
  Since the $E,I=i$ system is now separable from the $Z$ portion, $\Hmin(Z|E,I=i)_\chi = \Hmin(Z)_{\rho\up{i}}$, where $\rho\up{i} = \sum_jp(j|i)\kb{j}^Z$.  But by basic definitions of min entropy \cite{renner2008security}, we have $\Hmin(Z)_{\rho\up{i}} = -\log\max_j p(j|i)$.  Concluding:
    $\Hmin(Z|E)_\chi \ge \min_i\Hmin(Z|E,I=i)_\chi \ge -\log\max_{i,j}p(j|i) \ge -n\cdot c.$

  \end{proof}
  }{}

\subsection{Quantum Sampling}\label{sec:intro:sampling}

Our proof will use as a foundation a quantum sampling framework of Bouman and Fehr introduced originally in \cite{bouman2010sampling}.  This framework allows us to construct certain ``ideal'' states which behave nicely (to be formalized shortly) and which are $\epsilon$-close to the given input state, where $\epsilon$ will depend on the failure probability of a classical sampling strategy.  This framework has been used to analyze BB84 \cite{bouman2010sampling}, including the two-basis HD version of it in \cite{yao2022quantum}, and we have also had success in using it to prove novel entropic uncertainty relations \cite{krawec2019quantum,yao2022quantum,krawec2023entropic}.  Entropic uncertainty relations (see \cite{bialynicki2011entropic,coles2017entropic,wehner2010entropic} for a survey) are useful mathematical tools for proving security of quantum cryptographic protocols \cite{tomamichel2012tight} and our relations based on Bouman and Fehr's sampling methods have led to new relations which can sometimes outperform standard techniques \cite{yao2022quantum,krawec2023entropic}.  In this section, we briefly review some of the basic notions and results of this sampling method, referring the reader to \cite{bouman2010sampling} for additional details.

A \emph{sampling strategy} over words in $\al_d^N$ is a tuple $\samp = (P_T, g, r)$ where $P_T$ is a probability distribution over all possible subsets of $\{1, \cdots, N\}$ and where $g,r: \al_d^* \rightarrow \mathbb{R}^k$ (for some $k \ge 1$).  Here, ``$g$'' is a \emph{guess function} and ``$r$'' is a \emph{target function}. Given a word $q \in \al_d^N$, the goal of the strategy is to observe $g(q_t)$ (for some subset $t$ sampled according to $P_T$) and use that evaluation as a guess as to the target function evaluated on the \emph{unobserved} portion of the word, namely $r(q_{-t})$.  Note we are using a slightly extended definition of a sampling strategy used in \cite{yao2022quantum} (in the original \cite{bouman2010sampling}, $k=1$ always, however it is trivial to extend this to larger $k$ as discussed in \cite{yao2022quantum}).  Values $k > 1$ may arise when the sampling strategy looks at multiple statistics for example, as we require in our proof of security, later.

If $\samp$ is a good sampling strategy, it should hold that the guess and target functions are $\delta$ close to each other in each coordinate.  To formalize this, fix $\delta > 0$ and pick some subset $t$ such that $P_T(t) > 0$. We define the set of \emph{good words for subset $t$} for the given sampling strategy to be:
$\mathcal{G}^t_\delta(\samp) = \left\{q\in\al_d^N \st \max_j|g_j(q_t) - r_j(q_{-t})|\le\delta\right\},$
where $g_j$ represents the $j$'th coordinate of the output of $g$ (similarly for $r_j$).  We will sometimes simply write $\mathcal{G}^t$ if $\delta$ and $\samp$ are clear from context. Note that other less stringent conditions may be considered (e.g., \emph{some} of the coordinates are good while others may be larger than $\delta$), however we do not consider that here.  In particular, given $q\in\mathcal{G}^t$, it is guaranteed that, should the sampling strategy actually choose subset $t$, it will hold that the guess and target functions will be $\delta$-close to one another in each coordinate.

\begin{define}\label{def:introduction:error-prob}
Given $\samp$ and $\delta > 0$, the \emph{error probability of $\samp$} is defined to be:
$\epsilon_\delta^{cl} = \max_{q\in\al_d^N}Pr\left(q \not \in \mathcal{G}^t_\delta(\samp)\right),$
where the probability, above, is over the choice of subset $t$ according to $P_T$.  Note the ``$cl$'' superscript to denote that this is the error probability of a classical sampling strategy.
\end{define}

The above can all be extended in a natural way to quantum states.  Let $\mathcal{B}$ be some $d$-dimensional orthonormal basis and let $\ket{\psi}_{AE}$ be a quantum state where the $A$ system consists of $N$ qudits (each qudit of dimension $d$; thus $\dim A = d^N$).  The quantum sampling strategy will sample $t$ according to $P_T$ and measure those qudits indexed by $t$ in basis $\mathcal{B}$, resulting in a classical output $q_t\in\al_d^{|t|}$.  This allows the strategy to compute $g(q_t)$.  The main result of \cite{bouman2010sampling}, informally, is that the remaining unmeasured portion of $\ket{\psi}$ will behave as a superposition of ``good'' words, namely words that are $\delta$-close to $g(q_t)$.

Formally, we define the \emph{space of good words} to be:
$\mathcal{G}^t_\delta(\samp;\mathcal{B}) = \text{span}\left\{\ket{q}^\mathcal{B} \st q \in \mathcal{G}^t_\delta(\samp)\right\}\otimes\mathcal{H}_E.$
Note the dependence on the chosen basis.  A state $\ket{\phi^t}_{AE} \in \mathcal{G}^t_\delta(\samp;\mathcal{B})$ is said to be an \emph{ideal state}.  

Given the above, the main result of Bouman and Fehr \cite{bouman2010sampling} was the following theorem:
\begin{theorem}\label{thm:introduction:sample}
  (From \cite{bouman2010sampling}, though modified slightly for our purpose here): Let $\delta > 0$, $\ket{\psi}_{AE}$ be a quantum state where $\dim A = d^N$, $\mathcal{B}$ be a $d$-dimensional basis, and $\samp$ be some classical sampling strategy with error probability $\epsilon^{cl}_\delta$.  Then there exist ideal quantum states $\ket{\phi^t}\in \mathcal{G}^t_\delta(\samp;\mathcal{B})$ such that:
  \begin{equation}
    \frac{1}{2}\trd{\sum_tP_T(t)\kb{t}\otimes(\kb{\psi}_{AE} - \kb{\phi^t}_{AE})} \le \sqrt{\epsilon_\delta^{cl}}.
  \end{equation}
  We typically call the joint state $\sigma_{TAE}=\sum_t\kb{t}_T\kb{\phi^t}_{AE}$ the \emph{ideal state.}
\end{theorem}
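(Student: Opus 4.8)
The plan is to construct each ideal state $\ket{\phi^t}$ explicitly as the renormalized projection of $\ket{\psi}$ onto the good subspace, and then reduce the quantum trace-distance bound to the classical error probability $\epsilon^{cl}_\delta$ via a ``quantum-to-classical'' averaging argument. First I would fix, for each subset $t$ with $P_T(t)>0$, the orthogonal projector $P^t_{\text{good}} = \sum_{q\in\mathcal{G}^t_\delta(\samp)}\kb{q}^{\mathcal{B}}\otimes I_E$ onto $\mathcal{G}^t_\delta(\samp;\mathcal{B})$, together with $P^t_{\text{bad}} = I - P^t_{\text{good}}$, and define $\ket{\phi^t} = P^t_{\text{good}}\ket{\psi}/\|P^t_{\text{good}}\ket{\psi}\|$ whenever this norm is nonzero (and an arbitrary element of $\mathcal{G}^t_\delta(\samp;\mathcal{B})$ otherwise; note that if $\mathcal{G}^t_\delta(\samp)$ is empty for some $t$ drawn with positive probability then $\epsilon^{cl}_\delta = 1$ and the claim is vacuous). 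By construction each $\ket{\phi^t}$ is an ideal state, so it remains only to establish the distance bound.

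Second I would compute the per-$t$ contribution using the pure-state identity $\frac{1}{2}\trd{\kb{\psi}-\kb{\phi^t}} = \sqrt{1 - |\braket{\psi|\phi^t}|^2}$. Since $P^t_{\text{good}}$ is a projector, $\braket{\psi|\phi^t} = \bra{\psi}P^t_{\text{good}}\ket{\psi}/\|P^t_{\text{good}}\ket{\psi}\| = \|P^t_{\text{good}}\ket{\psi}\|$, hence $|\braket{\psi|\phi^t}|^2 = \|P^t_{\text{good}}\ket{\psi}\|^2 = 1 - \|P^t_{\text{bad}}\ket{\psi}\|^2$ and therefore $\frac{1}{2}\trd{\kb{\psi}-\kb{\phi^t}} = \|P^t_{\text{bad}}\ket{\psi}\|$ (in the degenerate case where $\ket{\phi^t}$ was chosen arbitrarily, this still holds as an inequality since the left side is at most $1$, which then equals the right side).

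Third I would use the block structure introduced by the $T$ register: because the $\kb{t}_T$ are mutually orthogonal, the operator inside the trace norm is block diagonal and its trace norm is the sum of the block trace norms, so
\[
\frac{1}{2}\trd{\sum_tP_T(t)\kb{t}\otimes(\kb{\psi}-\kb{\phi^t})} = \sum_t P_T(t)\,\|P^t_{\text{bad}}\ket{\psi}\|.
\]
By Jensen's inequality (concavity of the square root) this is at most $\sqrt{\sum_t P_T(t)\,\|P^t_{\text{bad}}\ket{\psi}\|^2}$, reducing everything to bounding a single averaged quantity.

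Finally — and this last reduction is the conceptual crux — I would bound $\sum_t P_T(t)\,\|P^t_{\text{bad}}\ket{\psi}\|^2$ by $\epsilon^{cl}_\delta$. Writing $\ket{\psi} = \sum_{q\in\al_d^N}\ket{q}^{\mathcal{B}}\otimes\ket{E_q}$, we have $\|P^t_{\text{bad}}\ket{\psi}\|^2 = \sum_{q\notin\mathcal{G}^t_\delta(\samp)}\bk{E_q}$, and $P_Q(q) := \bk{E_q}$ is exactly the distribution obtained by measuring the $A$ register of $\ket{\psi}$ in basis $\mathcal{B}$. Swapping the order of summation,
\[
\sum_tP_T(t)\,\|P^t_{\text{bad}}\ket{\psi}\|^2 = \sum_qP_Q(q)\sum_{t\,:\,q\notin\mathcal{G}^t_\delta(\samp)}P_T(t) = \sum_qP_Q(q)\,Pr_{t\sim P_T}\!\left(q\notin\mathcal{G}^t_\delta(\samp)\right),
\]
and each inner probability is at most $\epsilon^{cl}_\delta$ by Definition \ref{def:introduction:error-prob}, while $\sum_qP_Q(q) = 1$, so the whole quantity is at most $\epsilon^{cl}_\delta$. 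Combining with the previous paragraph yields the claimed bound $\sqrt{\epsilon^{cl}_\delta}$. The heart of the argument is that the fidelity loss from projecting onto good words, averaged over the random subset $t$, is precisely the measurement-averaged classical failure probability, and this is dominated by the worst-case classical error probability uniformly over the (adversarially chosen, and unknown to us) measurement statistics $P_Q$.
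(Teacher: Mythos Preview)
Your proof is correct and is essentially the standard argument from \cite{bouman2010sampling}: take each $\ket{\phi^t}$ to be the normalized projection of $\ket{\psi}$ onto the good subspace, reduce the per-$t$ trace distance to $\|P^t_{\text{bad}}\ket{\psi}\|$ via the pure-state fidelity formula, apply Jensen over the subset distribution, and identify the resulting average with the classical failure probability averaged over the measurement distribution induced by $\ket{\psi}$. The paper itself does not prove this theorem --- it is quoted (with slight reformulation) from \cite{bouman2010sampling} and simply cited --- so there is no in-paper proof to compare against.
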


Note that the above assumed a single party $A$ was performing the sampling.  However it is not difficult to see that the above may be generalized to a multi-party scenario $AB$ where $A$ and $B$ look at their individual portions of their systems and broadcast the outcome of their measurements (using, say, an authenticated channel).  That this all will follow, assuming one derives a suitable sampling strategy, from the above definitions is easy to see simply by considering the $AB$ system as a single party for the sake of sampling.  We later call such a setup a \emph{two-party sampling strategy}.  Finally, we note that the choice of subset $t$ may involve multiple random variables: for instance, a subset may be chosen, and then a subset within that subset using some additional seed randomness.  The above theorem still applies in this case.  See \cite{bouman2010sampling} for more details.

Before leaving, we discuss one basic sampling strategy, analyzed in \cite{bouman2010sampling}.  Let $\mathcal{G}^t_\delta = \{q\in\{0,1\}^N \st |w(q_t) - w(q_{-t})|\le\delta\}$ and let subsets $t$ be chosen uniformly at random from all subsets of size $m$, where $m \le N/2$.  Then it holds that the error probability of this strategy is $\max_{q\in\{0,1\}^N}Pr(q\not\in\mathcal{G}^t_\delta) \le \epsilon_0$, where:
\begin{equation}\label{eq:introduction:basic-sample}
\epsilon_0 = 2\exp\left(-\frac{\delta^2mN}{N+2}\right).
\end{equation}


\section{Protocol}\label{sec:protocol:}
We now discuss the HD-QKD protocol we are analyzing, which was originally introduced in \cite{cerf2002security}.  To save space, we only define the entanglement-based version of the protocol; the prepare and measure version is easily derived from this version, and is also discussed in detail in \cite{cerf2002security}.

Let $d$, a prime number, be the dimension of the system used each round of the protocol.  In this case, it is known that the eigenvectors of the operators $Z$ and $XZ^k$, for $k=0, \cdots, d-1$ form a maximal set of MUBs \cite{wootters1989optimal,durt2010mutually}, where $Z = \sum_j\exp(2\pi i j/d)\kb{j}$ and $X = \sum_j\ket{j}\bra{j-1}$.  Abstractly, we index the bases used, by $j=0, 1, \cdots, d$ with basis $j=0$ being the computational basis (the eigenvectors of the $Z$ operator).  Let $\Lambda^j=\{\Lambda^j_0,\cdots, \Lambda^j_{d-1}\},$ be a POVM with elements $\Lambda^j_c = \sum_x\kb{x}^j_A\otimes\kb{x+c}^j_B$, where the arithmetic is done modulo $d$.  Namely, this models Alice and Bob measuring in the $j$'th basis, resulting in outcomes that are ``$c$'' apart (e.g., Alice observes $\ket{2}^j$ and Bob observes $\ket{2+c\text{ mod } d}^j$).  In practice, parties would measure their systems and report their outcomes over the authenticated channel, allowing them to compute $c$ for each testing round.  However, it is mathematically equivalent to consider the case where Alice and Bob measure in this POVM and report the difference in their outcomes ``$c$.''

The protocol consists of Eve (the adversary) preparing a quantum state $\ket{\psi}_{ABE}$, where the $A$ and $B$ systems each consist of $N$ qudits, each qudit of dimension $d$.  Alice and Bob choose a random sample $t$ of size $m$ from all $N$ possible systems to use as a Test.  They then choose a string $s \in \al_{d+1}^m$, with $Pr(s_i=j) = 1/(d+1)$ and measure their systems, indexed by $t_{s=j}$ in POVM $\Lambda^j$ (in practice they will measure their individual systems in the $j$'th basis, report the outcome, and compute $c$, the difference in outcomes; it is not difficult to show measuring in $\Lambda^j$ will produce an equivalent entropy bound for the key-rate analysis).  Measuring each $t_{s=j}$ using POVM $\Lambda^j$ produces outcomes $q^0,\cdots, q^d$, where $q^j \in \al_d^{m_j}$, with $m_j = |t_{s=j}| = \num_j(s)$.  Ideally, if there is no noise in the channel, it should hold that $w(q^j) = 0$ for all $j$.  If the noise is too high, parties will abort.

For those remaining $n = N-m$ systems, not measured in the Test stage, parties will measure their systems in the $Z$ basis (the $0$'th basis).  This results in their raw key.  Parties will run error correction and privacy amplification on their raw keys to produce their final secret key of size $\ell$-bits.  Our goal, in the next section, is to derive a bound on $\ell$, based on the observed noise $q^j$, the dimension of the system, and the number of rounds and the sample size.  For more details on the protocol, especially the mathematically equivalent prepare and measure version, the reader is referred to \cite{cerf2002security}.


\section{Security Proof}

We now prove our main result, namely a key-rate bound for the $(d+1)$-MUB protocol.  For this we will require a new sampling strategy and it's corresponding error probability.  We will also prove two technical lemmas, the first allowing one to easily use Theorem \ref{thm:introduction:sample} to analyze QKD protocols (or potentially other quantum cryptographic protocols relying on privacy amplification); the second allowing one to bound the min entropy of a particular state with respect to the HD-Bell basis. Both results may be of independent interest.  After this, we will be able to state and prove our main result, Theorem \ref{thm:mainresult:main-thm}.

\subsection{Promotion of the Ideal State Analysis}\label{sec:mainresult:ideal-promote}

We first discuss a small lemma which allows one to easily promote the analysis done of a QKD protocol given the ideal states produced by Theorem \ref{thm:introduction:sample}.  In particular, it allows one to readily bound the key size based on the quantum min entropy of the ideal states, conditioned on not aborting.  Our proof is general and does not rely on anything specific to the protocol we are considering in this paper, nor does it require any particular sampling strategy.  Since computing the min entropy of the ideal states is, in general, easier than the real states, this provides a ready framework for other QKD protocols.  Note that in prior work, to promote the ideal state analysis to the real QKD protocol, two methods were primarily used.  Either a probabilistic argument was made \cite{krawec2019quantum,krawec2022security} involving Chebyshev's inequality, however this led to sub-optimal results due to the need to raise the classical error probability to the $1/6$ power (approximately), thus requiring a large sample size to ensure a small $\epsilon$.  Alternatively, in Bouman and Fehr's original sampling work \cite{bouman2010sampling}, a direct method was derived to bound the keyrate, however they did not explicitly deal with an abort case.  Dealing with an abort is important in QKD research and our new lemma below can handle this scenario.

\fullversion{
We first need to state the scenario we are analyzing.  Let $\samp$ be a two-party sampling strategy and $\bar{\mathcal{S}}$ be the space of possible outputs of this strategy (i.e., possible outputs of the guess functions).  Let $\mathcal{S} \subset \bar{\mathcal{S}}$ be a user-specified set of ``acceptable'' outcomes of the sampling strategy; namely if the sampling strategy produces a guess of $s \not\in \mathcal{S}$, then the protocol will abort (e.g., if the noise is high for example).  Note there may be other conditions to cause parties to abort as discussed later.

A general QKD protocol (though this can be generalized to other quantum cryptographic protocols such as QRNG \cite{herrero2017quantum} for instance) will begin with some input state $\rho\up{0}_{TABE}$.  The $T$ register is a random variable induced by the subset choices of $\samp$.  Generally the $T$ and $ABE$ systems are separable, however when dealing with ideal states from Theorem \ref{thm:introduction:sample}, they will be entangled.  The sampling strategy is run to produce a mixed state $\rho\up{1}_{TSABE}$ where the $TS$ registers are the subset choice and the output of the sampling strategy.  Note the $AB$ systems are possibly smaller now since the sampling strategy measured some of the qubits of the original $AB$ registers.

Next, we apply a CPTP map $\mathcal{R}_\mathcal{S}$ called an Abort map which will set an ``abort'' flag to \texttt{true} (or ``$1$'') in a new (two dimensional classical) register $R$ if the value in the $S$ register is not in the given $\mathcal{S}$.  Otherwise, it sets the abort register to \texttt{false} (or ``$0$'').  The quantum state after this operation will be denoted $\rho\up{2}_{RTSABE}$.  It may also abort based on other user specified parameters (e.g., insufficient raw key bits available for instance).

After this the ``core'' QKD protocol is run.  This will run the given QKD protocol assuming the abort register is set to zero (\texttt{false}).  This map will also run an error correction protocol leaking at most $\leakEC$ bits of information (again, assuming the abort register is \texttt{false} or $0$).  If the abort register is \texttt{true} or $1$, then the protocol aborts which will be modeled mathematically as the zero operator as in \cite{renner2008security}.  The new state, which is now potentially sub-normalized, is $p_{ok}(\rho)\rho^*_{TSABE,R=0}$ where now the $A$ and $B$ registers contain the error corrected raw keys.  Note that $p_{ok}(\rho)$ is the probability that the protocol does not abort if the original input state is $\rho\up{0}$.  The operator $\rho^*_{TSABE, R=0}$ is normalized and the $R$ register is always $0$ (i.e., it is the state of the system conditioned on Alice and Bob not aborting).  The $E$ register now contains any error correction leakage.  Also note that we used a ``$*$'' superscript as this particular state will be important in our lemma as it is the state right before privacy amplification.

Finally, privacy amplification is run, which we denote as the CPTP map $\mathcal{P}_\ell$.  This will choose a random two-universal hash function with output size of $\ell$-bits, reveal it (i.e., Eve will know the choice also), and both Alice and Bob will apply the chosen hash function to their raw key bit registers resulting in their final secret key registers $K_A$ and $K_B$.  The choice of hash function will be stored in a classical register $F$. This creates the final state $p_{ok}(\rho)\rho_{TSK_AK_BEF,R=0}$.  Note the lack of superscript indicating this is the final state.}
{
  Let $\samp$ be a two-party sampling strategy and $\bar{\mathcal{S}}$ be the space of possible outputs of this strategy.  Let $\mathcal{S}\subset\bar{\mathcal{S}}$ be a user-specified set of acceptable outcomes (those for which parties will not abort).  A general QKD protocol can be modeled as a sequence of operations.  Starting with $\rho\up{0}_{TABE}$, where the $T$ register is induced by the sampling strategy and is generally independent of $\rho\up{0}_{ABE}$, the sampling strategy is run, yielding $\rho\up{1}_{TSABE}$, where $TS$ now contain the output of the strategy while the $AB$ registers may be smaller, now, as some qudits were measured.  Next, a CPTP map $\mathcal{R}_{\mathcal{S}}$, called an Abort map, is run which will set an ``abort'' flag to \texttt{true} or ``$1$'' in a new register $R$ if the sampling strategy produced an outcome not in $\mathcal{S}$.  Otherwise the abort flag is set to ``$0$.''  The new quantum state is denoted $\rho\up{2}_{RTSABE}$.

  Next, the ``core'' QKD protocol is run assuming the abort register is set to \texttt{false}, i.e., $R=0$.  This map will also run an error correction protocol, leaking $\leakEC$ bits of information.  If the abort register $R=1$, then the protocol simply aborts at this stage, which is modeled mathematically as the zero operator as in \cite{renner2008security} (note the protocol may also abort for other reasons at this stage).  The new state, which is now potentially sub-normalized, is $p_{ok}(\rho)\rho^*_{TSABE,R=0}$ where now the $A$ and $B$ registers contain the error corrected raw keys.  Note that $p_{ok}(\rho)$ is the probability that the protocol does not abort if the original input state is $\rho\up{0}$.  The operator $\rho^*_{TSABE, R=0}$ is normalized (it is the state of the system conditioned on Alice and Bob not aborting).  The $E$ system contains any error correction leakage.

  Finally, privacy amplification is run, denoted as the CPTP map $\mathcal{P}_\ell$.  This will choose a random two-universal hash function with output size of $\ell$-bits, store the choice in classical register $F$, and both Alice and Bob will apply the function to their raw key registers resulting in final secret keys $K_A$ and $K_B$.   This creates the final state $p_{ok}(\rho)\rho_{TSK_AK_BEF,R=0}$.  Note the lack of superscript indicating this is the final state.
  }

In summary, we have the following chain of maps:
\begin{equation}
\rho\up{0} \xrightarrow{\Psi} \rho\up{1}\xrightarrow{\mathcal{R}_\mathcal{S}}\rho\up{2}\xrightarrow{\texttt{QKD}} p_{ok}\rho^* \xrightarrow{\mathcal{P}_\ell} p_{ok}\rho
\end{equation}
\fullversion{Our lemma, then, may be stated below:}{}
\begin{lemma}\label{lemma:result:security-chain}
  Let $\delta > 0$ and let $\rho\up{0}_{TABE}$ be an input state where the $T$ register is induced by the given sampling strategy $\samp$ (e.g., $\rho_T = \sum_tP_T(t)\kb{t}$), and is independent of the $ABE$ system.  Let $\sigma\up{0}_{TABE}$ be the ideal state from Theorem \ref{thm:introduction:sample} and let $\sigma^*_{TSABE,R=0}$ be the resulting operator conditioning on not aborting when running the QKD protocol as discussed.  Let:
  \begin{equation}
    \Hmin(A|E,T=t,S=s,R=0)_{\sigma^*} \ge \gamma
  \end{equation}
  for some $\gamma$ and for every $t$ and $s\in\mathcal{S}$ (note that the $E$ system contains information on the error correction leakage which must be taken into account when determining $\gamma$).  Then, for $\ell \ge 0$, it holds that:
  \begin{equation}\label{eq:result:lemma-chain}
    p_{ok}(\rho)\trd{\rho_{K_AQ} - \uni{\ell}\otimes\rho_{Q}} \le 2^{-\frac{1}{2}(\gamma - \ell)} + 4\sqrt{\epsilon_\delta^{cl}}.
  \end{equation}
  where $Q$ is the joint system $Q = (TSEF,R=0)$
  That is, one may determine the security of the ``real'' QKD protocol if one can bound the min entropy of the ideal system, conditioned on not aborting, for every subset $t$ and every $s\in\mathcal{S}$.
\end{lemma}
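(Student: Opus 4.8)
The plan is to chain together the three relevant facts: (i) the trace-distance bound between the real input state $\rho\up{0}$ and the ideal state $\sigma\up{0}$ from Theorem~\ref{thm:introduction:sample}; (ii) the fact that all the maps in the protocol chain ($\Psi$, $\mathcal{R}_{\mathcal{S}}$, the core QKD map with error correction, and privacy amplification $\mathcal{P}_\ell$) are CPTP (or, in the abort case, trace-nonincreasing), hence nonexpansive on trace distance; and (iii) the privacy amplification bound, Equation~\eqref{eq:main:PA}, applied to the \emph{ideal} post-correction state $\sigma^*$, using the hypothesis $\Hmin(A|E,T=t,S=s,R=0)_{\sigma^*}\ge\gamma$ together with the classical-side-information property, Equation~\eqref{eq:main:min-ent-classical}.

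First I would push the ideal state through the same chain of maps as the real state, obtaining $\sigma\up{1},\sigma\up{2},p_{ok}(\sigma)\sigma^*,p_{ok}(\sigma)\sigma$ in parallel to $\rho\up{1},\dots,p_{ok}(\rho)\rho$. Since Theorem~\ref{thm:introduction:sample} gives $\tfrac12\trd{\sum_t P_T(t)\kb{t}\otimes(\kb{\psi}-\kb{\phi^t})}\le\sqrt{\epsilon_\delta^{cl}}$, i.e.\ $\trd{\rho\up{0}-\sigma\up{0}}\le 2\sqrt{\epsilon_\delta^{cl}}$ after the $T$-register is attached, and since each subsequent map is trace-nonincreasing, we get $\trd{p_{ok}(\rho)\rho - p_{ok}(\sigma)\sigma}\le 2\sqrt{\epsilon_\delta^{cl}}$ for the final states, and likewise for the reduced states on $(K_AQ)$. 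The subtlety here is that both sides are sub-normalized by possibly-different abort probabilities $p_{ok}(\rho)$ and $p_{ok}(\sigma)$; I would handle this by tracking the zero-operator convention carefully (abort = $\mathbf 0$) so that the abort map and QKD map really are nonexpansive on the sub-normalized operators, and by noting $|p_{ok}(\rho)-p_{ok}(\sigma)|\le 2\sqrt{\epsilon_\delta^{cl}}$ as well, which contributes another $\sqrt{\epsilon_\delta^{cl}}$-order term when comparing the two ideal/real outputs. The factor of $4$ in front of $\sqrt{\epsilon_\delta^{cl}}$ (rather than $2$) should come out of combining these triangle-inequality steps — roughly two applications of the nonexpansiveness bound, before and after privacy amplification, plus the $p_{ok}$ discrepancy.

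Next, for the ideal branch, I would bound $p_{ok}(\sigma)\trd{\sigma_{K_AQ}-\uni\ell\otimes\sigma_Q}$. Decompose over the classical registers $T$, $S$ (and $R$), using that conditioned on $R=0$ only $s\in\mathcal S$ survive: by Equation~\eqref{eq:main:min-ent-classical}, $\Hmin(A|E,T,S,R=0)_{\sigma^*}\ge\min_{t,s\in\mathcal S}\Hmin(A|E,T=t,S=s,R=0)_{\sigma^*}\ge\gamma$. Then apply the privacy amplification inequality \eqref{eq:main:PA} with $F$ revealed, giving $\trd{\sigma_{K_A,E F T S,R=0}-\uni\ell\otimes\sigma_{EFTS,R=0}}\le 2^{-\frac12(\gamma-\ell)}$; weighting by $p_{ok}(\sigma)$ (which only shrinks things) gives the $2^{-\frac12(\gamma-\ell)}$ term. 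Finally, combine the two branches by the triangle inequality: $p_{ok}(\rho)\trd{\rho_{K_AQ}-\uni\ell\otimes\rho_Q}\le \trd{p_{ok}(\rho)\rho_{K_AQ}-p_{ok}(\sigma)\sigma_{K_AQ}} + p_{ok}(\sigma)\trd{\sigma_{K_AQ}-\uni\ell\otimes\sigma_Q} + \trd{p_{ok}(\sigma)\uni\ell\otimes\sigma_Q - p_{ok}(\rho)\uni\ell\otimes\rho_Q}$, where the first and third terms are each $O(\sqrt{\epsilon_\delta^{cl}})$ by the chain argument and the second is $2^{-\frac12(\gamma-\ell)}$, yielding \eqref{eq:result:lemma-chain}.

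The main obstacle I expect is the bookkeeping around sub-normalized operators and the two abort probabilities: one must be careful that "conditioning on $R=0$" and "multiplying by $p_{ok}$" are done consistently on both the real and ideal branches, that the abort map composed with the QKD map is genuinely nonexpansive in trace norm on the relevant sub-normalized states (this is where the zero-operator modeling of abort matters), and that revealing $F$ and the sampling transcript $TS$ does not hide any dependence that would invalidate \eqref{eq:main:PA}. The quantum-sampling and entropy ingredients are essentially black-boxed; the real work is making the hybrid argument between the real protocol run and the ideal protocol run airtight while carrying the $\sqrt{\epsilon_\delta^{cl}}$ error through every CPTP step.
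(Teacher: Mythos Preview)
Your proposal is correct and follows essentially the same route as the paper: propagate the $2\sqrt{\epsilon_\delta^{cl}}$ bound from Theorem~\ref{thm:introduction:sample} through the CPTP chain to the sub-normalized post-protocol states, apply the privacy-amplification bound \eqref{eq:main:PA} on the ideal branch using the $\gamma$ hypothesis, and finish with exactly the three-term triangle-inequality split you wrote down. The only cosmetic difference is that the paper handles the sub-normalization by explicitly decomposing $\rho\up{2}$ and $\sigma\up{2}$ on the classical $R$ register and discarding the $R=1$ block (rather than invoking trace-nonincreasingness of the abort map abstractly), and it applies \eqref{eq:main:PA} pointwise in $(t,s)$ rather than first invoking \eqref{eq:main:min-ent-classical}; both choices lead to the same constants.
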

\begin{proof}
  From Theorem \ref{thm:introduction:sample}, we know that $\frac{1}{2}\trd{\rho\up{0}_{TABE} - \sigma\up{0}_{TABE}} \le \sqrt{\epsilon^{cl}_\delta}.$
  Since CPTP maps do not increase trace distance, we have $\frac{1}{2}\trd{\rho\up{2}_{RTSABE}, \sigma\up{2}_{RTSABE}} \le \sqrt{\epsilon^{cl}_\delta}$.
  We may write these two states in the following form:
    $\rho\up{2}_{RTSABE} = p_{ok}(\rho)\kb{0}_R\otimes\rho\up{2}_{TSABE,R=0} + p_{rej}(\rho)\kb{1}_R\otimes\rho\up{2}_{TSABE,R=1}$ and
   $ \sigma\up{2}_{RTSABE} = p_{ok}(\sigma)\kb{0}_R\otimes\sigma\up{2}_{TSABE,R=0} + p_{rej}(\sigma)\kb{1}_R\otimes\sigma\up{2}_{TSABE,R=1}$

  From the above, and properties of trace distance, we have:
  \begin{align*}%
    \sqrt{\epsilon^{cl}_\delta} &\ge \frac{1}{2}\trd{\rho\up{2}_{RTSABE} - \sigma\up{2}_{RTSABE}}\\
                                &= \frac{1}{2}\trd{p_{ok}(\rho)\rho\up{2}_{TSABE,R=0} - p_{ok}(\sigma)\sigma\up{2}_{TSABE,R=0}}\\
    &+ \frac{1}{2}\trd{p_{rej}(\rho)\rho\up{2}_{TSABE,R=1} - p_{rej}(\sigma)\sigma\up{2}_{TSABE,R=1}}\\
                                & \ge \frac{1}{2}\trd{p_{ok}(\rho)\rho\up{2}_{TSABE,R=0} - p_{ok}(\sigma)\sigma\up{2}_{TSABE,R=0}}.
  \end{align*}
  Since $\mathcal{P}_\ell$ is a CPTP map, this further implies that
  \begin{equation}\label{eq:result:samp-diff}
    \trd{p_{ok}(\rho)\rho_{TSK_AEF,R=0} - p_{ok}(\sigma)\sigma_{TSK_AEF,R=0}}\le 2\sqrt{\epsilon^{cl}_\delta}.
  \end{equation}
  (Note that, above, we traced out $K_B$, but partial trace does not increase trace distance.)

  Now, let's look at the final ideal state, conditioned on not aborting, namely $\sigma_{TSK_AK_BER=0}$.  Since $\Hmin(A|E,R=0,T=t,S=s)_{\sigma^*} \ge \gamma$ for every $s\in\mathcal{S}$ and every subset $t$, we have, by Equation \ref{eq:main:PA}, that:
  \begin{equation}\label{eq:result:samp-ideal-pa}
    \trd{\sigma_{K_AEF,R=0,T=t,S=s} - \uni{\ell}\otimes \sigma_{EF,R=0,T=t,S=s}} \le 2^{-\frac{1}{2}(\gamma - \ell)}.
  \end{equation}

  Now, let $Q = (TSEF,R=0)$ for notational simplification.  Then, from the triangle inequality and basic properties of trace distance, we have:
  \begin{align}%
    &p_{ok}(\rho)\trd{\rho_{K_AQ} - \uni{\ell}\otimes\rho_{Q}}\label{generic:result:e1}\\
    &= ||\left(p_{ok}(\rho)\rho_{K_AQ} - p_{ok}(\sigma)\sigma_{K_AQ}\right)\notag\\
    &+ \left(p_{ok}(\sigma)\sigma_{K_AQ} - p_{ok}(\rho)\uni{\ell}\otimes\rho_{Q}\right)||\label{generic:result:e2}\\\notag\\
    &\le2\sqrt{\epsilon^{cl}_\delta} + ||\left(p_{ok}(\sigma)\sigma_{K_AQ} - p_{ok}(\sigma)\uni{\ell}\otimes\sigma_{Q}\right)\notag\\
    &+ \left(p_{ok}(\sigma)\uni{\ell}\otimes\sigma_{Q} - p_{ok}(\rho)\uni{\ell}\otimes\rho_{Q}\right)||\label{generic:result:e3}\\\notag\\
    & \le 2\sqrt{\epsilon^{cl}_\delta} + p_{ok}(\sigma)2^{-\frac{1}{2}(\gamma-\ell)}\notag\\
    &+ \trd{\left(p_{ok}(\sigma)\uni{\ell}\otimes\sigma_{Q} - p_{ok}(\rho)\uni{\ell}\otimes\rho_{Q}\right)}\label{generic:result:e4}\\\notag\\
    &\le 2^{-\frac{1}{2}(\gamma-\ell)} + 4\sqrt{\epsilon^{cl}_\delta}.\label{generic:result:e5}
  \end{align}
  Moving from Equations \ref{generic:result:e2} to \ref{generic:result:e3} follows from the triangle inequality and the bound in Equation \ref{eq:result:samp-diff}; moving from Equations \ref{generic:result:e3} to \ref{generic:result:e4} follows the triangle inequality and from Equation \ref{eq:result:samp-ideal-pa} (along with the fact that $p_{ok}(\cdot) \le 1$); and finally moving from Equation \ref{generic:result:e4} to \ref{generic:result:e5} follows, again, from Equation \ref{eq:result:samp-diff}, along with basic properties of trace distance.  This completes the proof.

\end{proof}

\subsection{Min Entropy of HD-Bell States}
We next show how to bound the min entropy of a superposition of Bell states if one knows the number of each Bell state appearing in a superposition.
\begin{lemma}\label{lemma:mainresult:min-ent-bell}
  Let $\ket{\psi}_{ABE} = \sum_{i\in J}\alpha_i \ket{\phi_i}_{AB}\ket{E_i}_E$, where $J = \{i \in \Bell^n \st \num_\alpha^\beta(i) = \lambda_\alpha^\beta\}$.  Assume a measurement is made of the $A$ and $B$ systems in the computational basis resulting in $\rho_{A_ZB_ZE}$.  Then:
  \begin{equation}
    \Hmin(A_Z|E)_\rho \ge n\log_2d - \frac{1}{\log_d2}\sum_{\alpha=0}^{d-1}n_\alpha h_d\left(\frac{1}{n_\alpha}\sum_{\beta = 1}^{d-1}\lambda_\alpha^\beta\right),
  \end{equation}
  where: $n_\alpha = \sum_{\beta=0}^{d-1}\lambda_\alpha^\beta$.  Note that the summation in the entropy starts at one while all other summations start at zero and that $h_d(x)$ is the $d$-ary entropy function, defined in Section \ref{sec:main:prelim}.
\end{lemma}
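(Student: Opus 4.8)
The plan is to strip the information carried by the Bell indices off in two layers --- first the ``bit'' coordinates, then the ``phase'' coordinates --- so as to reduce the claim to a statement about a superposition of generalized Fourier-basis states, which I can then dispatch with Lemmas~\ref{lemma:main:entropy-super} and~\ref{lemma:introduction:entropy-mixed}. Throughout I would write each $i\in\Bell^n$ as a pair $(i^\bit,i^\phase)$ of a bit string $i^\bit\in\al_d^n$ and a phase string $i^\phase\in\al_d^n$, and note that membership in $J$ forces $i^\bit$ to have exactly $n_\alpha$ copies of each symbol $\alpha$ and, among the coordinates carrying $\alpha$, exactly $\lambda_\alpha^\beta$ copies of each phase $\beta$.

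First I would reveal the bit string. The subspaces $\mathrm{span}\{\ket{\vec a,\vec a+\vec x}_{AB}:\vec a\in\al_d^n\}$, over $\vec x\in\al_d^n$, are mutually orthogonal, diagonal in the computational basis, and each $\ket{\phi_i}$ lies in the one indexed by $\vec x=i^\bit$, so there is a projective measurement on $AB$ that records $i^\bit$ into a classical register $X$. The point that makes this free is that performing it does not change the $A_ZE$ marginal obtained after Alice measures $A$ in the computational basis and $B$ is traced out: the cross terms between distinct bit strings are annihilated by that measure-then-trace operation anyway, since they push $B$ into orthogonal computational kets. Hence $\rho_{A_ZE}$ is the $X$-marginal of a state classical in $X$, and Equation~\ref{eq:main:min-ent-classical} gives $\Hmin(A_Z|E)_\rho\ge\min_{\vec x}\Hmin(A_Z|E,X=\vec x)$, the minimum over type-compatible $\vec x$ (the rest carry probability zero); for such $\vec x$ the conditional state on $ABE$ is the normalized $\ket{\psi_{\vec x}}\propto\sum_{i\in J,\,i^\bit=\vec x}\alpha_i\ket{\phi_i}_{AB}\ket{E_i}_E$.

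Next, for a fixed type-compatible $\vec x$, I would apply the (known) local correction $\bigotimes_jX^{-x_j}$ on $B$, which sends each $\ket{\phi_{\vec x}^{\vec y}}$ to the phase-only state $\ket{\phi_{\vec 0}^{\vec y}}=d^{-n/2}\sum_{\vec a}\omega^{\vec a\cdot\vec y}\ket{\vec a,\vec a}_{AB}$ (with $\vec a\cdot\vec y=\sum_j a_jy_j$); this is a unitary on $B$ alone and so leaves $\Hmin(A_Z|E)$ untouched. Tracing out $B$ then already dephases $A$ in the computational basis (the attendant kets $\ket{\vec a}_B$ are orthogonal), and a short calculation identifies the resulting state $\rho_{A_ZE}^{(\vec x)}$ with the one obtained by measuring, in the computational basis, the $A$ register of the pure state $\ket{\Psi^{(\vec x)}}_{AE}=\sum_{\vec y\in Y_{\vec x}}\alpha_{(\vec x,\vec y)}\ket{\vec y}^F_A\otimes\ket{E_{(\vec x,\vec y)}}_E$, where $\ket{\vec y}^F:=d^{-n/2}\sum_{\vec a}\omega^{\vec a\cdot\vec y}\ket{\vec a}$ runs over the generalized Fourier basis of $\al_d^n$ and $Y_{\vec x}=\{\vec y:(\vec x,\vec y)\in J\}$; the only algebraic fact in play is that the overlap $\braket{\vec a|\vec y}^F$ equals $d^{-n/2}\omega^{\vec a\cdot\vec y}$. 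Since $\ket{\Psi^{(\vec x)}}$ has exactly the form required by Lemma~\ref{lemma:main:entropy-super} --- Fourier basis in the role of the $X$-basis, index set $Y_{\vec x}$ --- I get $\Hmin(A_Z|E)_{\rho^{(\vec x)}}\ge\Hmin(A_Z|E)_{\sigma^{(\vec x)}}-\log_2|Y_{\vec x}|$, where $\sigma^{(\vec x)}$ is the computational measurement of the mixed version $\sum_{\vec y}|\alpha_{(\vec x,\vec y)}|^2\kb{\vec y}^F_A\otimes\kb{E_{(\vec x,\vec y)}}_E$; and because the computational and Fourier bases are MUB, Lemma~\ref{lemma:introduction:entropy-mixed} gives $\Hmin(A_Z|E)_{\sigma^{(\vec x)}}\ge n\log_2d$ (the mixed state leaves $A_Z$ uniform and independent of $E$). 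So $\Hmin(A_Z|E)_{\rho^{(\vec x)}}\ge n\log_2d-\log_2|Y_{\vec x}|$ for every admissible $\vec x$. (This last inequality can also be obtained directly from the operational min-entropy formula applied to $\ket{\psi_{\vec x}}$, together with a single Cauchy--Schwarz step on the $|Y_{\vec x}|$-term sum $\sum_{\vec y}\alpha_{(\vec x,\vec y)}\omega^{\vec a\cdot\vec y}\ket{E_{(\vec x,\vec y)}}$ and a resummation over the guessing POVM.)

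Finally I would count: for every type-compatible $\vec x$, $|Y_{\vec x}|=\prod_{\alpha=0}^{d-1}\binom{n_\alpha}{\lambda_\alpha^0,\dots,\lambda_\alpha^{d-1}}$, the same for all $\vec x$. Writing $e_\alpha:=\sum_{\beta=1}^{d-1}\lambda_\alpha^\beta=n_\alpha\mu_\alpha$ with $\mu_\alpha=\tfrac1{n_\alpha}\sum_{\beta\ge1}\lambda_\alpha^\beta$, I would bound $\binom{n_\alpha}{\lambda_\alpha^0,\dots,\lambda_\alpha^{d-1}}=\binom{n_\alpha}{e_\alpha}\binom{e_\alpha}{\lambda_\alpha^1,\dots,\lambda_\alpha^{d-1}}\le\binom{n_\alpha}{e_\alpha}(d-1)^{e_\alpha}\le 2^{n_\alpha h_2(\mu_\alpha)}(d-1)^{n_\alpha\mu_\alpha}$ (here $h_2$ is the binary entropy), and then invoke the identity $h_2(\mu)+\mu\log_2(d-1)=\tfrac1{\log_d2}h_d(\mu)$ to conclude $\log_2|Y_{\vec x}|\le\tfrac1{\log_d2}\sum_\alpha n_\alpha h_d(\mu_\alpha)$; together with the previous step this is exactly the stated bound. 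The step I expect to be the real obstacle is the syndrome reduction: I will need to argue carefully that revealing the bit string is genuinely cost-free --- which holds only because measuring $A$ and tracing out $B$ already kills all inter-syndrome coherence --- and that the post-reduction $A$ register is a Fourier-basis superposition indexed by precisely the admissible phase strings $Y_{\vec x}$ with the environment carried along unchanged; once those are nailed down, the rest is a routine application of the two entropy lemmas and the multinomial estimate.
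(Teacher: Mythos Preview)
Your proposal is correct and follows essentially the same approach as the paper's proof: separate the bit and phase coordinates of the Bell indices, reduce to a Fourier-basis superposition over the admissible phase strings, invoke Lemmas~\ref{lemma:main:entropy-super} and~\ref{lemma:introduction:entropy-mixed}, and bound the phase-index set by $\prod_\alpha d^{n_\alpha h_d(\mu_\alpha)}$. The only difference is packaging---the paper effects the bit/phase split by introducing an auxiliary state $\ket{\tau}=\sum_i\alpha_i\ket{i^\bit}^Z_A\ket{i^\phase}^F_B\ket{E_i}$ and verifying $\Hmin(B_Z|E)_\tau=\Hmin(A_Z|E)_\rho$ by explicit computation, whereas you condition on the bit string as free side information and apply a local shift on $B$---but both routes land on the same Fourier-basis state and the same counting argument.
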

\begin{proof}
  Taking $\ket{\psi}_{ABE}$ and measuring in the computational basis yields the following mixed state:
  \begin{equation}
    \rho_{A_ZB_ZE} = \sum_{a,b\in\al_d^n}\kb{a,b}P\left(\sum_{i\in J}\alpha_i\braket{a,b|\phi_i}\ket{E_i}\right).
  \end{equation}
  Recall that $\ket{\phi_{i_j}} = \frac{1}{\sqrt{d}}\sum_{x=0}^{d-1}\omega^{x\cdot i_j^\phase}\ket{x}_A\ket{x+i_j^\bit}_B$, where the arithmetic is done modulo $d$.  Thus:
  \begin{equation}
    \braket{a_j,b_j|\phi_{i_j}} = \left\{\begin{array}{ll}
                                           \frac{1}{\sqrt{d}}\omega^{a_j\cdot i_j^\phase} & \text{ if } b_j = a_j + i_j^\bit \text{ (mod $d$)}\\
                                           0 & \text{ otherwise}
                                               \end{array}\right.
  \end{equation}
  From which it follows that $\braket{a,b|\phi_i} = \frac{1}{d^{n/2}}\omega^{a\cdot i^\phase}$ if $b = a \oplus_d i^\bit \iff i^\bit = b \ominus_d a$ and otherwise $\braket{a,b|\phi_i} = 0$.  Above, we have $a\cdot i^\phase = \sum_{j=0}^{n}a_j i_j^\phase$ and where $\oplus_d$  (respectively $\ominus_d$) is defined to be character-wise addition (respectively subtraction) modulo $d$.

  Thus, it follows that:
  \begin{equation*}
    \rho_{A_ZB_ZE} = \sum_{a,b\in\al_d^n}\kb{a,b}\otimes P\left(\frac{1}{d^{n/2}}\sum_{\substack{i\in J\\\stt i^\bit = b\ominus_d a}}\alpha_i\omega^{a\cdot i^\phase}\ket{E_i}\right)
  \end{equation*}
  Tracing out $B_Z$ yields:
  \begin{equation*}\label{eq:mainresult:entropy-lemma-rho}
    \rho_{A_ZE} = \sum_{a\in\al_d^n}\kb{a}\otimes\sum_{b\in\al_d^n} P\left(\frac{1}{d^{n/2}}\sum_{\substack{i\in J\\\stt i^\bit = b\ominus_d a}}\alpha_i\omega^{a\cdot i^\phase}\ket{E_i}\right)
  \end{equation*}

  Now, consider, instead, the following state:
  \begin{equation}
    \ket{\tau}_{ABE} = \sum_{i\in J}\alpha_i\ket{i^\bit}^Z_A\ket{i^\phase}^F_B\ket{E_i},
  \end{equation}
  where $F$ denotes the Fourier basis; namely for $j\in\al_d$, we define $\ket{j}^F = \frac{1}{\sqrt{d}}\sum_{x=0}^{d-1}\omega^{x\cdot j}\ket{x}$.  Clearly $\ket{\tau}$ is a valid quantum state as it can be constructed from $\ket{\psi}$ via the unitary operator mapping $\ket{\phi_{i_j}} \mapsto \ket{i_j^\bit}\ket{i_j^\phase}^F$ to each of the $n$ Bell pairs in the $AB$ register.  (Of course, Alice and Bob don't actually have to do this - we are simply arguing that $\ket{\tau}$ is a valid quantum state.)  We claim that, if the $A$ and $B$ registers of $\ket{\tau}$ are measured in the $Z$ basis, resulting in operator $\tau_{A_ZB_ZE}$, then it holds that $\Hmin(B_Z|E)_\tau = \Hmin(A_Z|E)_\rho$.  Thus, we will simply need to compute the min entropy of $\tau_{B_ZE}$ instead of $\rho_{A_ZE}$ which turns out to be easier.

  Measuring the $AB$ registers of $\ket{\tau}$ in the $Z$ basis yields:
  \begin{align*}
    \tau_{A_ZB_ZE} &= \sum_{a,b\in\al_d^n}\kb{a,b}P\left(\sum_{i\in J}\alpha_i\braket{a|i^\bit}\braket{b|i^\phase}^F\ket{E_i}\right)\\
    &= \sum_{a,b}\kb{a,b}P\left(\sum_{\substack{i\in J\\\stt i^\bit = a}} \alpha_i\braket{b|i^\phase}^F\ket{E_i}\right).
  \end{align*}
  Now, since $\braket{b_j|i^\phase_j}^F = \frac{1}{\sqrt{d}}\omega^{b_j\cdot i_j^\phase}$, we have $\braket{b|i^\phase}^F = \frac{1}{d^{n/2}}\omega^{b\cdot i^\phase}$ and, so making this substitution into the above, and then tracing out $A_Z$ yields:
  \begin{equation}
    \tau_{B_ZE} = \sum_{b}\kb{b}\sum_aP\left(\frac{1}{d^{n/2}}\sum_{\substack{i\in J\\\stt i^\bit = a}}\alpha_i\omega^{b\cdot i^\phase}\ket{E_i}\right).
  \end{equation}
  Now, re-ordering the above sum, we can write the state as:
    \begin{equation}
    \tau_{B_ZE} = \sum_{b}\kb{b}\sum_xP\left(\frac{1}{d^{n/2}}\sum_{\substack{i\in J\\\stt i^\bit = x\ominus_d b}}\alpha_i\omega^{b\cdot i^\phase}\ket{E_i}\right).
  \end{equation}
  Comparing this, to Equation \ref{eq:mainresult:entropy-lemma-rho}, shows that $\Hmin(B_Z|E)_\tau$ will equal $\Hmin(A_Z|E)_\rho$.

  We must now compute $\Hmin(B_Z|E)_\tau$.  Returning to $\ket{\tau}$, before any measurements are made, we may write $\ket{\tau}$ in the following form: $\ket{\tau}_{ABE} = \sum_{i\in J}\alpha_i\ket{i^\bit}_A\ket{i^\phase}_B^F\ket{E_i}=$
  \begin{align*}
    \sum_{i^\bit\in J^\bit}\beta_{i^\bit}\ket{i^\bit}_A\otimes\sum_{i^\phase\in J^\phase_{i^\bit}}\gamma_{i^\phase,i^\bit}\ket{i^\phase}_B^F\ket{E_i},
  \end{align*}
  where:
  \begin{align*}%
    J^\bit &= \left\{i^\bit\in\al_d^n \st \num_\alpha(i^\bit) = \sum_{\beta=0}^{d-1}\lambda_{\alpha}^\beta \text{ } \forall \alpha = 0, \cdots, d-1\right\}\\
    J^\phase_{i^\bit} &= \left\{i^\phase\in\al_d^n \st \num_\alpha^\beta{i^\phase \choose i^\bit} = \lambda_\alpha^\beta \text{ } \forall \alpha,\beta = 0, \cdots, d-1\right\}
  \end{align*}

  Tracing out the $A$ system above (which is equivalent to measuring it and then discarding the outcome) yields:
  \begin{equation}
    \sigma_{BE} = \sum_{i^\bit\in J^\bit}|\beta_{i^\bit}|^2P\left(\sum_{i^\phase\in J^\phase_{i^\bit}}\gamma_{i^\phase,i^\bit}\ket{i^\phase}^F_B\ket{E_i}\right).
  \end{equation}
  
  By Lemmas \ref{lemma:main:entropy-super} and \ref{lemma:introduction:entropy-mixed}, along with Equation \ref{eq:main:min-ent-classical}, $\Hmin(B_Z|E)_\sigma \ge n\log_2d - \max_{i^\bit\in J^\bit}\log_2\left|J^\phase_{i^\bit}\right|$.  Thus, we must bound the size of $J^\phase_{i^\bit}$.  Fix a particular $i^\bit\in J^\bit$ and let $n_\alpha = \num_\alpha(i^\bit) = \sum_{\beta = 0}^{d-1}\lambda_{\alpha}^\beta$.  Then:
  \begin{align}%
    &|J^\phase_{i^\bit}| = \left|\left\{i^\phase\in\al_d^n \st \num_\alpha^\beta{i^\phase \choose i^\bit} = \lambda_\alpha^\beta\right\}\right|\notag\\
                        &=\left|\left\{i_{(0)}^\phase\in\al_d^{n_0}, \cdots, i_{(d-1)}^\phase\in\al_d^{n_{d-1}} \st \num_\beta(i_{(\alpha)}^\phase) = \lambda_\alpha^\beta\right\}\right|\label{eq:mainresult:lemma1:1}\\
                        &= \prod_{c=0}^{d-1}\left|\left\{i_{(c)}^\phase\in\al_d^{n_c} \st \num_\beta(i_{(c)}^\phase) = \lambda_c^\beta\right\}\right|\label{eq:mainresult:lemma1:2}\\
                        &\le \prod_c\left|\left\{i_{(c)}^\phase \in \al_d^{n_c} \st wt(i_{(c)}^\phase) \le \sum_{\beta=1}^{d-1}\lambda_c^\beta\right\}\right|\label{eq:mainresult:lemma1:3}\\
    &\le \prod_cd^{n_ch_d\left(\frac{1}{n_c}\sum_{\beta \ge 1}\lambda_c^\beta\right)}.\label{eq:mainresult:lemma1:4}
  \end{align}
  (note, above, $i_{(c)}$ is a bit string and does not mean the $c$'th character of $i$.)
  Above, Equation \ref{eq:mainresult:lemma1:1} follows by dividing $i^\phase$ into substrings based on the value of $i^\bit$ (which, for the bound above, is fixed).  Equation \ref{eq:mainresult:lemma1:2} follows by simply noting that the constraints on $i_{(j)}^\phase$ are independent of the constraints placed on $i_{(j')}^\phase$ for $j'\ne j$.  Equation \ref{eq:mainresult:lemma1:3} follows by removing specific constraints on the number of each character, and replacing it with a constraint, simply, on the number of non-zero characters (we comment that future work may improve our bound at this point).  Finally Equation \ref{eq:mainresult:lemma1:4} follows from the well-known bound on the volume of a Hamming ball.  The final bound above is independent of the choice of $i^\bit$.  From this, the lemma is proven.
\end{proof}

\subsection{Sampling Strategy}\label{sec:mainresult:sample-strat}

Our proof of security will use Theorem \ref{thm:introduction:sample} and, as a consequence, we must define and analyze an appropriate classical sampling strategy that correctly encapsulates the QKD protocol's testing stage.  We will be working with the HD-Bell basis as discussed in Section \ref{sec:main:prelim}.  As such our sampling will be with respect to words $q \in \Bell^N$.  The strategy will first choose a subset $t$ of size $m$ out of all possible $N$ systems (i.e., $t \subset \{1,\cdots, N\}$) and then choose a string $s \in \al_{d+1}^m$, where $Pr(s) = Pr(s_1)\cdots Pr(s_m)$ and $Pr(s_i = j) = 1/d+1$ for all $j = 0, 1, \cdots, d$.  This will simulate the testing subset choice as discussed earlier.  In the following, denote $m_j = |t_{s=j}|$, namely the size of the subset where $POVM$ $\Lambda^j$ will be used (see Section \ref{sec:protocol:}).

Now, normally, Alice and Bob will measure subset $t_{s=j}$ in POVM $\Lambda^j$ to observe the noise in the $j$'th basis.  This will be used to estimate the noise in the unobserved portion of the state.   In our case, we are trying to model this classically; for each $j=0,\cdots, d$ (possible basis) and $c=0,\cdots, d-1$ (possible outcome), define a set $P_c^j$ as follows:
\begin{equation}
  P_c^j = \left\{(\alpha,\beta)\in\al_d\times\al_d \st \braket{\phi_\alpha^\beta|\Lambda^j_c|\phi_\alpha^\beta} \ge 0\right\}.
\end{equation}
Namely, $P_c^j$ is the set of all Bell states where the probability of observing POVM $\Lambda_c^j$, given that Bell state, is non-zero.

From this, instead of defining the sampling strategy in terms of guess and target functions, we simply go straight to defining the set of good words (the guess and target functions can then be easily derived from this definition, however we feel defining the set of good words is more intuitive).  Note that our subset depends on $t$ and $s$, so the set of good words also depends on these two values (see our comment at the end of Section \ref{sec:intro:sampling}).  This set is denoted $\mathcal{G}^{t,s}$, and is the set of all $i \in \Bell^N$ such that:
\begin{equation}\label{eq:mainresult:good-words-actual}
\max_{j,c}\left|\frac{1}{m_j}\sum_{(\alpha,\beta)\in P_c^j} \num_\alpha^\beta(i_{t_{s=j}}) - \frac{1}{n}\sum_{(\alpha,\beta)\in P_c^j}\num_\alpha^\beta(i_{-t})\right| \le \delta
\end{equation}
where, again, $m_j = |t_{s=j}| = \num_j(s)$.  Note that, above, we only need to consider $c\ge 1$ as the $c=0$ case is then determined.  We need the error probability of this strategy as defined in Definition \ref{def:introduction:error-prob},
where the probability is over all subsets $t$ and strings $s$.  To do so we will first consider the following, ``simpler'' strategy which only considers a single $j$ and $c$ (instead of all of them).   In particular, let $\mathcal{G}^{t,s,j,c}_\delta=$
  \begin{equation}\label{eq:mainresult:good-words-new-sample}
     \left\{q\in\Bell^N \st \left|\frac{1}{m_j}\sum_{(\alpha,\beta)\in P^j_c} \num_\alpha^\beta(q_{t_{s=j}}) - \frac{1}{n}\sum_{\alpha,\beta\in P^j_c}\num_\alpha^\beta(q_{-t})\right|\right\}
  \end{equation}
  Let $\epsilon_\delta^{j,c} = \max_{q\in\Bell^N} Pr_{t,s}\left(q \not \in \mathcal{G}^{t,s,j,c}\right)$.  From the union bound, it will follow that the failure probability of our desired sampling strategy (Equation \ref{eq:mainresult:good-words-actual}) will be upper-bounded by:
  \begin{equation}\label{eq:mainresult:sample-failure-pr}
    \epsilon_\delta^{cl} \le \max_{q\in\Bell^N}Pr\left(q \not\in \mathcal{G}^{t,s}_\delta\right) \le (d+1)(d-1)\max_{j,c}\epsilon_\delta^{j,c}.
  \end{equation}
  The above follows since there are $d-1$ characters to consider (not counting the zero character which is determined once we have the others) and there are $d+1$ bases.  We therefore must analyze the ``simple'' strategy which is considered below:

\begin{lemma}\label{lemma:mainresult:}
  Let $j \in \{0, 1, \cdots, d\}$, $c \in \{1,2,\cdots, d-1\}$, and $\delta > 0$.  Consider the simple sampling strategy discussed above, acting on words in $\Bell^N$ where the set of good words is defined in Equation \ref{eq:mainresult:good-words-new-sample}.  Then it holds that:
  \begin{align}
    \epsilon_\delta^{j,c} &\le 2\min_{\substack{\delta_1\in (0,\delta)\\\beta\in (0,1/(d+1))}}\left[ \exp\left(-\frac{(\delta-\delta_1)^2mN}{N+2}\right)\right.\notag\\
    &+\left. \exp\left(-\frac{\delta_1^2m^2\left(\frac{1}{d+1} - \beta\right)}{m+2}\right) + \exp\left(-2\beta^2m\right)\right],\label{eq:mainresult:sampling-main-error}
  \end{align}
  where the probability is over all subset choices $t \subset \{1, \cdots, N\}$ of size $m$ and also the choice of $s \in \al_{d+1}^m$ chosen uniformly at random.
\end{lemma}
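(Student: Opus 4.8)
The plan is to reduce this Bell-alphabet sampling problem to an ordinary binary one, and then to bound its failure probability by chaining two applications of the basic binary sampling estimate of Equation \ref{eq:introduction:basic-sample} through an intermediate statistic, together with a Hoeffding bound on the (random) size of the inner sub-sample. First I would reduce to binary: fix $q\in\Bell^N$ and, for the given $j,c$, define $b\in\{0,1\}^N$ by letting $b_r=1$ iff the Bell character $q_r$ lies in $P_c^j$, and $b_r=0$ otherwise. Then $\sum_{(\alpha,\beta)\in P_c^j}\num_\alpha^\beta(q_{t_{s=j}})$ is exactly the number of ones of $b$ lying in the index set $t_{s=j}$, so $\frac{1}{m_j}\sum_{(\alpha,\beta)\in P_c^j}\num_\alpha^\beta(q_{t_{s=j}})=w(b_{t_{s=j}})$ and $\frac{1}{n}\sum_{(\alpha,\beta)\in P_c^j}\num_\alpha^\beta(q_{-t})=w(b_{-t})$; hence $q\in\mathcal{G}^{t,s,j,c}_\delta$ iff $|w(b_{t_{s=j}})-w(b_{-t})|\le\delta$, which is purely a statement about relative Hamming weights of the fixed string $b$ under the two-stage random index choice. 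Since the bound produced below will not depend on $b$, the $\max_q$ is harmless.

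Next I would split through the weight $w(b_t)$ on the \emph{full} test set $t$: for a free parameter $\delta_1\in(0,\delta)$, if both $|w(b_{t_{s=j}})-w(b_t)|\le\delta_1$ and $|w(b_t)-w(b_{-t})|\le\delta-\delta_1$ hold then $q$ is good, so it suffices to bound the probabilities of the two complementary failure events $B$ (inner estimate fails) and $C$ (outer estimate fails) and add. Bounding $C$ is immediate: $t$ is a uniformly random size-$m$ subset of $\{1,\cdots,N\}$ and $C$ depends only on $t$, so $\{|w(b_t)-w(b_{-t})|>\delta-\delta_1\}$ is precisely a failure of the basic strategy of Equation \ref{eq:introduction:basic-sample} with population $N$ and sample $t$, giving $Pr(C)\le 2\exp(-(\delta-\delta_1)^2 mN/(N+2))$.

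For $B$ I would fix a second free parameter $\beta\in(0,\frac{1}{d+1})$, set $m_0=m(\frac{1}{d+1}-\beta)$, and condition on the event $m_0\le m_j\le m-m_0$, where $m_j=\num_j(s)\sim\mathrm{Bin}(m,\frac{1}{d+1})$. Conditioned on the value of $m_j$ and on $t$, the set $t_{s=j}$ is a uniformly random subset of $t$ of size $m_j$, and the identity $w(b_{t_{s=j}})-w(b_t)=\frac{m-m_j}{m}(w(b_{t_{s=j}})-w(b_{t\setminus t_{s=j}}))$ shows $B\subseteq\{|w(b_{t_{s=j}})-w(b_{t\setminus t_{s=j}})|>\delta_1\}$; applying Equation \ref{eq:introduction:basic-sample} inside the population $t$ of size $m$ — taking whichever of $t_{s=j}$, $t\setminus t_{s=j}$ is the smaller so the half-population hypothesis is met, and using that the quantity is symmetric under swapping them — bounds this event by $2\exp(-\delta_1^2\min(m_j,m-m_j)\,m/(m+2))$, which on the conditioning event is at most $2\exp(-\delta_1^2 m^2(\frac{1}{d+1}-\beta)/(m+2))$. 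The two bad-size events $\{m_j<m_0\}$ and $\{m_j>m-m_0\}$ are each a one-sided deviation of $m_j$ from its mean $m/(d+1)$ by at least $\beta m$ — for the upper one using $1-\frac{2}{d+1}\ge\frac13>0$ since $d\ge2$ — so by Hoeffding each has probability at most $\exp(-2\beta^2 m)$. Altogether $Pr(B)\le 2\exp(-\delta_1^2 m^2(\frac{1}{d+1}-\beta)/(m+2))+2\exp(-2\beta^2 m)$.

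Adding the bounds on $Pr(B)$ and $Pr(C)$, pulling out the common factor $2$, and finally minimizing over $\delta_1\in(0,\delta)$ and $\beta\in(0,\frac{1}{d+1})$ (the bound holds for every admissible choice) yields Equation \ref{eq:mainresult:sampling-main-error}. The step I expect to need the most care is the inner estimate: because $t_{s=j}$ is doubly random — a sub-subset of a random subset — with fluctuating size, one must route everything through the intermediate statistic $w(b_t)$, use the $\min(m_j,m-m_j)$ observation so that Equation \ref{eq:introduction:basic-sample} stays applicable, and peel off both tails of $m_j$ with Hoeffding, all while keeping the three exponents in the clean product form claimed.
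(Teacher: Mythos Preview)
Your proposal is correct and follows essentially the same route as the paper's proof: reduce to a binary string via the indicator of $P_c^j$, split through the intermediate statistic $w(b_t)$ with the triangle inequality, bound the outer deviation $|w(b_t)-w(b_{-t})|$ directly by Equation~\ref{eq:introduction:basic-sample}, and bound the inner deviation $|w(b_{t_{s=j}})-w(b_t)|$ by first controlling the binomial size $m_j$ via Hoeffding and then applying Equation~\ref{eq:introduction:basic-sample} inside $t$. Your explicit use of the identity $w(b_{t_{s=j}})-w(b_t)=\tfrac{m-m_j}{m}\bigl(w(b_{t_{s=j}})-w(b_{t\setminus t_{s=j}})\bigr)$ together with the $\min(m_j,m-m_j)$ symmetry is a mild sharpening of the paper's argument, which instead imposes the extra constraint $\beta<\tfrac12-\tfrac{1}{d+1}$ to force $m_j\le m/2$ directly.
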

\begin{proof}
  First, consider instead the following sampling strategy over bit-strings: A subset $t$ is chosen along with $s$, same as above, but now we only care about the relative Hamming weight of the string indexed by $t_{s=j}$.  In this case, the set of good words, denoted $\widetilde{\mathcal{G}}^{t,s,j}_\delta$ is found to be:
  \begin{equation}
    \widetilde{\mathcal{G}}^{t,s,j}_\delta = \left\{q \in \{0,1\}^N \st | w(q_{t_{s=j}}) - w(q_{-t})| \le \delta \right\}.
  \end{equation}
  We claim that:
  \begin{equation}\label{eq:mainresult:samp-lemma-1}
    \max_{q\in\Bell^N} Pr_{t,s}\left(q \not \in \mathcal{G}^{t,s,j,c}\right) \le \max_{q\in\{0,1\}^N}Pr\left(q \not\in \widetilde{\mathcal{G}}^{t,s,j}_\delta\right).
  \end{equation}
  Indeed, for any $q\in\Bell^N$, let $\widetilde{q}\in\{0,1\}^N$ such that $\widetilde{q}_i = 1$ only if $(q_i^\bit, q_i^\phase)\in P^j_c$.  Then, clearly:
  \begin{equation}
    \frac{1}{m_j}\sum_{(\alpha,\beta)\in P^j_c} \num_\alpha^\beta(q_{t_{s=j}}) = \frac{1}{m_j}\num_1(\widetilde{q}_{t_{s=j}}) = w(\widetilde{q}_{t_{s=j}}).
  \end{equation}
  Similarly we have $\frac{1}{n}\sum_{\alpha,\beta}\num_{\alpha}^\beta(q_{-t}) = w(\widetilde{q}_{-t})$.  Thus, if $q\not \in \mathcal{G}^{t,s,j,c}_\delta$, then it follows $\widetilde{q}\not\in \widetilde{\mathcal{G}}^{t,s,j}_\delta$ from which Equation \ref{eq:mainresult:samp-lemma-1} follows.  We must therefore bound the failure probability of this sampling strategy over bit strings (i.e., the right-hand-side of Equation \ref{eq:mainresult:samp-lemma-1}).  For this, we will use arguments from Bouman and Fehr's paper \cite{bouman2010sampling} when they were analyzing various classical sampling strategies, namely their ``Example 3'' and ``Example 4'' strategies.  Our strategy is different from those analyzed in \cite{bouman2010sampling}, and will not follow immediately from their results, however, and so we must go through and prove it here formally.  The idea is that $t_{s=j}$ can be used to argue about the behavior of the word on $t$ which, itself, can be used to argue about the behavior of the word on $-t$.

  Fix $q\in\{0,1\}^N$ and fix some subset $t$ of size $m$.  First, we show that:
$    Pr_s\left(\left| w(q_{t_{s=j}}) - w(q_t)\right| > \delta\right) \le \epsilon_1$
  for some $\epsilon_1$ to be computed.  In particular, the weight within the subset $t_{s=j}$ is close to the weight within subset $t$ itself.  The probability, above, is over the choice of $s$, not $t$.

  Clearly, we have $\mathbb{E}(m_j) = \frac{m}{d+1}$ where, again, $m_j = |t_{s=j}| = \num_j(s)$.  By Hoeffding's Inequality, we have, for any $\beta \in (0, 1/(d+1))$:
  \begin{equation}
    Pr_s\left(\left|\frac{m_j}{m} - \frac{1}{d+1}\right| \ge \beta\right) \le 2\exp(-2\beta^2m).
  \end{equation}
  Let $\mathcal{M} = \left\{m_j \st \left|\frac{m_j}{m} - \frac{1}{d+1}\right| \le \beta\right\}$.  From the above $Pr(m_j \not\in\mathcal{M}) \le 2\exp(-2\beta^2m)$.  Then:
  \begin{align*}%
    &Pr_s\left(\left|w(q_{t_{s=j}}) - w(q_t)\right| \ge \delta\right)\\
    &\le Pr_s(m_j\in\mathcal{M})Pr_s\left(\left|w(q_{t_{s=j}}) - w(q_t)\right| \st m_j \in \mathcal{M}\right)\\
    &+ Pr(m_j\not\in \mathcal{M})\notag\\
    &\le Pr_s\left(\left|w(q_{t_{s=j}}) - w(q_t)\right| \st m_j \in \mathcal{M}\right) + 2\exp(-2\beta^2m).
  \end{align*}
  Now, consider the above conditional probability.  This is equivalent to choosing a subset of $t$, of fixed size $m_j$, uniformly at random.  If $\beta < \frac{1}{2} - \frac{1}{d+1}$, then $m_j \le \frac{m}{2}$ (since $m_j\in\mathcal{M}$) and so Equation \ref{eq:introduction:basic-sample} can be applied to bound:
  \begin{align*}%
    &Pr_s\left(\left|w(q_{t_{s=j}}) - w(q_{t})\right| \st m_j\in\mathcal{M}\right)\\
    &\le 2\max_{m_j\in\mathcal{M}}\exp\left(-\frac{\delta^2m_jm}{m+2}\right)
    \le 2\exp\left(-\frac{\delta^2m^2\left(\frac{1}{d+1}-\beta\right)}{m+2}\right)
  \end{align*}
  and so:
  \begin{align}
    &Pr_s\left(\left|w(q_{t_{s=j}}) - w(q_t)\right| \ge \delta\right) \notag\\
    &\le 2\exp\left(-\frac{\delta^2m^2\left(\frac{1}{d+1}-\beta\right)}{m+2}\right) + 2\exp(-2\beta^2m).\label{eq:mainresult:sample-lemma-2}
  \end{align}

  Now, by the triangle inequality, we have for every $t, s$ and $q$:
  \begin{equation}
\left|w(q_{t_{s=j}}) - w(q_{-t})\right| \le \left|w(q_{t_{s=j}}) - w(q_t)\right| + \left|w(q_t) - w(q_{-t})\right|.
  \end{equation}
  Let $\delta = \delta_1 + \delta_2$ (for $\delta_1,\delta_2 > 0$).  Then:
  \begin{align*}%
    &Pr_{t,s}\left(\left|w(q_{t_{s=j}}) - w(q_{-t})\right| \ge \delta_1+\delta_2\right)\\
    &\le  Pr\left(\left|w(q_{t_{s=j}}) - w(q_t)\right| + \left|w(q_t) - w(q_{-t})\right| \ge \delta_1+\delta_2\right)\le\\
    &Pr\left(\left|w(q_{t_{s=j}}) - w(q_t)\right| \ge \delta_1 \text{ or } \left|w(q_t) - w(q_{-t})\right| \ge \delta_2\right)\le\\
    & Pr\left(\left|w(q_{t_{s=j}}) - w(q_t)\right| \ge \delta_1\right) + Pr\left(\left|w(q_t) - w(q_{-t})\right| \ge \delta_2\right)
  \end{align*}

  Now, $Pr\left(\left|w(q_t) - w(q_{-t})\right| \ge \delta_2\right)$ does not depend on $s$ and so we may again use Equation \ref{eq:introduction:basic-sample} to find:
  \begin{equation}
    Pr\left(\left|w(q_t) - w(q_{-t})\right| \ge \delta_2\right) \le 2\exp\left(-\frac{\delta_2^2mN}{N+2}\right).
  \end{equation}
  On the other hand, since $s$ and $t$ are independent, we have:
    $Pr_{t,s}\left(\left|w(q_{t_{s=j}}) - w(q_t)\right| \ge \delta_1\right)$ is equal to  $\sum_tP_T(t)Pr_s\left(\left|w(q_{t_{s=j}}) - w(q_t)\right| \ge \delta_1\right),$
  and, thus, Equation \ref{eq:mainresult:sample-lemma-2} applies.  Since the setting of $\beta$ and $\delta_1$ can be optimized by the user, and since $\delta_2 = \delta-\delta_1$, Equation \ref{eq:mainresult:sampling-main-error} follows, thus completing the proof.
\end{proof}

\subsection{Main Result}

We are now, finally, in a position to state and prove our main theorem, namely a key-rate bound for the $(d+1)$-MUB protocol.  The proof is actually now very straight forward, given the above technical groundwork.

\begin{theorem}\label{thm:mainresult:main-thm}
  Let $\epsilon > 0$, $\delta > 0$, and let $\ket{\psi}_{ABE}$ be a quantum state, potentially produced by an adversary, where the $A$ and $B$ registers consist of $N$ qudits each (with each qudit of prime dimension $d \ge 2$).  Let $\mathcal{Q} = \{\hat{Q}_c^j\}$, for $c = 1, \cdots, d-1$ and $j=0, \cdots d$ be the maximal allowed error in each basis $j$ for each character $c$ (namely, the protocol aborts if the observed $w(q^j_c) > \hat{Q}^j_c$ (see Section \ref{sec:protocol:}).  Then, the protocol will produce an $(\epsilon+4\sqrt{\epsilon_\delta^{cl}})$-secure key (according to Definition \ref{def:secure} and where $\epsilon_\delta^{cl}$ is given by Equation \ref{eq:mainresult:sample-failure-pr}), of size $\ell$-bits, where $\ell = $
  \begin{equation}\label{eq:keyratefunction}
    n\log_2d - \frac{1}{\log_d2}\sum_{\alpha=0}^{d-1}n_\alpha h_d\left(\frac{1}{n_\alpha}\sum_{\beta=1}^{d-1}\lambda_\alpha^\beta\right) - \leakEC - 2\log\frac{1}{\epsilon},
  \end{equation}
  and where:
  \begin{equation}
    n_\alpha = \sum_{\beta=0}^{d-1}\lambda_\alpha^\beta, \text{ and }     \lambda_\alpha^\beta = \frac{n}{d}\left( \hat{Q}_\alpha^0 + \sum_{s=0}^{d-1}\hat{Q}_{s\alpha-\beta \text{ mod } d}^{s+1} - 1\right)
  \end{equation}
  Above, $\leakEC$ is the error correction leakage.
\end{theorem}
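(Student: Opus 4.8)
The plan is to chain the three technical results assembled above. First I would take the adversary's state $\ket{\psi}_{ABE}$, adjoin the registers $T$ and $S$ recording the subset and basis choices of the testing stage, and apply Theorem~\ref{thm:introduction:sample} with the sampling basis $\mathcal{B}$ taken to be the HD-Bell basis and $\samp$ the two-party sampling strategy of Section~\ref{sec:mainresult:sample-strat}, whose classical error probability $\epsilon_\delta^{cl}$ is controlled by Lemma~\ref{lemma:mainresult:} through Equation~\ref{eq:mainresult:sample-failure-pr}. This produces the ideal state $\sigma^{(0)}_{TSABE}$, which is $2\sqrt{\epsilon_\delta^{cl}}$-close to the real one and whose unmeasured portion is supported on good Bell words. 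I would then invoke Lemma~\ref{lemma:result:security-chain}: it reduces the entire theorem to exhibiting a number $\gamma$ with $\Hmin(A|E,T=t,S=s,R=0)_{\sigma^*}\ge\gamma$ for every subset $t$ and every accepting outcome $s\in\mathcal{S}$, after which the choice $\ell=\gamma-2\log(1/\epsilon)$ makes $2^{-\frac12(\gamma-\ell)}=\epsilon$ and yields an $(\epsilon+4\sqrt{\epsilon_\delta^{cl}})$-secure key of that length (using that partial trace does not increase trace distance to pass from $Q=(TSEF,R=0)$ down to $E$, matching Definition~\ref{def:secure}). Everything therefore comes down to lower-bounding $\gamma$ for the conditioned ideal state.

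For that bound, fix $t$ and $s\in\mathcal{S}$ and condition on not aborting. By the abort map the observed test frequencies obey $\frac{1}{m_j}\sum_{(\alpha,\beta)\in P_c^j}\num_\alpha^\beta(q_{t_{s=j}})\le\hat Q_c^j$ for $c\ge 1$, and by Theorem~\ref{thm:introduction:sample} the $n=N-m$ unmeasured pairs of $\sigma^*$ form a superposition $\sum_q\alpha_q\ket{\phi_q}_{AB}\ket{E_q}$ over Bell words $q\in\Bell^n$ whose line-statistics $\frac1n\sum_{(\alpha,\beta)\in P_c^j}\num_\alpha^\beta(q)$ each lie within $\delta$ of the corresponding observed value. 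The combinatorial heart is to convert these aggregate (``line'') constraints into per-Bell-state population bounds. For $d$ prime, the $d+1$ partitions $\{P_c^j:c\in\al_d\}$, $j=0,\dots,d$, are precisely the $d+1$ pencils of parallel lines (striations) of the affine plane $\mathbb{Z}_d^2$: the $j=0$ pencil is the ``columns'' $P_c^0=\{(\alpha,\beta):\alpha=c\}$, and the $j=s+1$ pencil ($s=0,\dots,d-1$) is $P_c^{s+1}=\{(\alpha,\beta):s\alpha-\beta\equiv c\}$, which one reads off from the MUB structure, where a Bell pair $\ket{\phi_\alpha^\beta}$ deterministically produces outcome-difference $\alpha$ in basis $0$ and $s\alpha-\beta$ in basis $s+1$, so that $\Lambda_c^j=\sum_{(\alpha,\beta)\in P_c^j}\kb{\phi_\alpha^\beta}$. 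Every point lies on exactly one line of each pencil, so summing over all $d+1$ pencils the count on the line through a fixed $(\alpha_0,\beta_0)$ counts that point $d+1$ times and every other point once; this gives the Wigner-type inversion $\num_{\alpha_0}^{\beta_0}(q)=\frac1d\big(\sum_{j}(\text{count on the pencil-}j\text{ line through }(\alpha_0,\beta_0))-n\big)$. Feeding the good-word slack and the abort thresholds into this identity yields the population bounds $\num_\alpha^\beta(q)\le\lambda_\alpha^\beta$ with $\lambda_\alpha^\beta$ as in the theorem (up to the $O(\delta)$ slack), together with $\sum_\beta\num_\alpha^\beta(q)\le n_\alpha$.

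With this, the unmeasured part of $\sigma^*$ is exactly a superposition of Bell words of the type handled by Lemma~\ref{lemma:mainresult:min-ent-bell}; in fact that lemma's proof goes through verbatim for any superposition whose per-block phase weight is uniformly bounded, since the exact-count hypothesis there is only used to bound $|J^\phase_{i^\bit}|$, and the Hamming-ball volume estimate $d^{n_\alpha h_d(\cdot)}$ needs only the bound $\sum_{\beta\ge 1}\lambda_\alpha^\beta$ on each block (the map $n'\mapsto n'h_d(W/n')$ being nondecreasing, so that inflating the block length to $n_\alpha$ is harmless). This gives $\Hmin(A_Z|E)\ge n\log_2 d-\frac{1}{\log_d 2}\sum_\alpha n_\alpha h_d\!\big(\frac{1}{n_\alpha}\sum_{\beta=1}^{d-1}\lambda_\alpha^\beta\big)$ for the state before error correction, and since revealing the $\leakEC$ syndrome bits can drop the conditional min entropy by at most $\leakEC$, we obtain $\gamma$ equal to the first three terms of Equation~\ref{eq:keyratefunction}; combining with Lemma~\ref{lemma:result:security-chain} and $\ell=\gamma-2\log(1/\epsilon)$ completes the proof. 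The step I expect to be the real obstacle is the affine-plane inversion: pinning down the exact outcome functionals $(\alpha,\beta)\mapsto c$ for each MUB (including the constant shifts that make $\ket{\phi_0^0}$ give $c=0$ in every basis), verifying $\Lambda_c^j=\sum_{(\alpha,\beta)\in P_c^j}\kb{\phi_\alpha^\beta}$ so that the HD-Bell-basis quantum sampling genuinely captures the $\Lambda^j$ measurements used in the protocol, and then carefully threading the abort thresholds (and the worst case over accepting statistics) and the $\delta$-slack through the inversion to land precisely on the stated $\lambda_\alpha^\beta$ and $n_\alpha$. Everything else is bookkeeping on top of the lemmas already established.
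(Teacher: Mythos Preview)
Your proposal is correct and follows essentially the same route as the paper: apply Theorem~\ref{thm:introduction:sample} in the HD-Bell basis with the sampling strategy of Section~\ref{sec:mainresult:sample-strat}, reduce via Lemma~\ref{lemma:result:security-chain} to a min-entropy bound on the conditioned ideal state, invert the $(d+1)$ line constraints to per-Bell-state populations (the paper simply cites the formula, while you spell out the affine-plane/Wigner inversion), feed these into Lemma~\ref{lemma:mainresult:min-ent-bell}, subtract $\leakEC$, and set $\ell=\gamma-2\log(1/\epsilon)$. Your remark that Lemma~\ref{lemma:mainresult:min-ent-bell} only needs upper bounds on the block phase-weights (rather than exact counts) is in fact a slightly cleaner way to handle the $\delta$-slack than the paper's own treatment, which instead takes the worst case $Q_c^j=\hat Q_c^j+\delta$ and appeals to the exact-count version; both land on the same $\gamma$.
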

\begin{proof}
  Let $\rho_{TSABE} = \sum_{t,s}p(t,s)\kb{t,s}\otimes\kb{\psi}$ be the real state input to the QKD protocol, where the distribution over the $TS$ register is induced by the sampling strategy analyzed in Section \ref{sec:mainresult:sample-strat} (namely $p(t,s) = p(t)p(s)$ with $p(t)$ being the uniform distribution over subsets of size $m$ and with $s$ being chosen uniformly at random from all $s \in \al_{d+1}^m$; again, refer to the sampling strategy in Section \ref{sec:mainresult:sample-strat} for more details).  Let $\sigma_{TSABE} = \sum_{t,s}p(t,s)\kb{t,s}\kb{\phi\up{t,s}}_{ABE}$ be the ideal state constructed from Theorem \ref{thm:introduction:sample}, again using as a sampling strategy the one analyzed in Section \ref{sec:mainresult:sample-strat} and with respect to the high-dimensional Bell basis.   

  We compute the min entropy of the ideal state first; then we may use Lemma \ref{lemma:result:security-chain} to promote this to an analysis of the given real input state $\ket{\psi}_{ABE}$.  We first run the sampling strategy on the ideal state $\sigma_{TSABE}$; the choice of subset $t$ and $s$ causes the ideal state to collapse to $\ket{\phi\up{t,s}}_{ABE}$.  Measuring $t_{s=j}$, for $j=0, \cdots, d$ using POVM $\Lambda^j$ (see Section \ref{sec:protocol:}) results in outcome $q = (q^0,q^1,\cdots, q^d)$, with each $q^j \in \al_d^{m_j}$.  This results in the post measured state $\sigma\up{t,s,q}_{ABE}$ (conditioning on a particular $t,s$ and $q$), where the $A$ and $B$ registers consist of $n = N-m$ qudits each.  Since the initial state was an ideal state, living in the spanning set induced by Equation \ref{eq:mainresult:good-words-actual}, we may write this state in the following form (after tracing out the measured system and permuting subspaces):
  \begin{equation}
    \sigma\up{t,s,q}_{ABE} = \sum_{i\in J_q}p_iP\left(\sum_{k\in\mathcal{I}_{q,i}}\alpha_{k,t,s,q,i}\ket{\phi_k}\ket{E_{k,t,s,q,i}}\right),
  \end{equation}
  where:
  \begin{equation}
    J_q = \left\{i\in\Bell^m \st \sum_{(\alpha,\beta)\in P_c^j}\num_\alpha^\beta(i_{s=j}) = \num_c(q^j)\right\}
  \end{equation}
  and $\mathcal{I}_{q,i} $
    $= \{k \in \Bell^n \st \frac{1}{n}\sum_{(\alpha,\beta)\in P_c^j}\num_\alpha^\beta(k) = Q_c^j\},$
  where $Q_c^j = \frac{1}{m_j}\num_c(q^j) + \delta_c^j$ for $\delta_c^j \in [-\delta,\delta]$.

  Now, Alice and Bob will abort if $q^j_c > \hat{Q}_c^j$, where the latter are provided by the users and specified in $\mathcal{Q}$.  Thus, conditioning on not aborting, the state can be written:
  \begin{equation*}
    \sigma\up{2}_{TSQ} = \frac{1}{p_{ok}(\sigma)}\sum_{t,s}p(t,s)\kb{t,s}_{TS}\sum_{q\in\mathcal{Q}}p(q|t,s)\kb{q}_Q\otimes\sigma_{ABE}\up{t,s,q}
  \end{equation*}
  where $p_{ok}(\sigma)$ is the probability of not aborting the protocol, given the ideal state (see Section \ref{sec:mainresult:ideal-promote}).

  A nice property of the MUBs chosen is that, given $\frac{1}{n}\sum_{(\alpha,\beta)}\num_\alpha^\beta(k) = Q_c^j$, for every $c, j$, then one can solve for each $\frac{1}{n}\num_\alpha^\beta(k)$ (for each $\alpha,\beta \in \al_d$).  In particular, it can be shown that:
   $ \frac{1}{n}\num_\alpha^\beta(k) = \frac{1}{d}\left( Q_\alpha^0 + \sum_{s=0}^{d-1}Q_{s\alpha-\beta \text{ mod } d}^{s+1} - 1\right).$
  (See \cite{sheridan2010security,wyderka2025high} for details on how the above equation is derived from the observed statistics.)

  Let $\lambda_\alpha^\beta = \frac{n}{d}\left( Q_\alpha^0 + \sum_{s=0}^{d-1}Q_{s\alpha-\beta \text{ mod } d}^{s+1} - 1\right)$.  Then, after measuring the $A$ register in the $Z$ basis (resulting in register $A_Z$) it holds that, from Lemma \ref{lemma:mainresult:min-ent-bell} that $ \Hmin(A_Z|E)_{\sigma\up{t,s,q}} \ge \gamma$ with:
   $\gamma = n\log_2d - \frac{1}{\log_d2}\sum_{\alpha=0}^{d-1}n_\alpha h_d\left(\frac{1}{n_\alpha}\sum_{\beta = 1}^{d-1}\lambda_\alpha^\beta\right),$
  where:
    $n_\alpha = \sum_{\beta=0}^{d-1}\lambda_\alpha^\beta.$

  Since the above expression for $\gamma$ is clearly minimized when $Q_c^j = \hat{Q}_c^j + \delta$ for every $j$ and $c \ge 1$ (i.e., we assume the worst case, that the noise is as high as possible before Alice and Bob abort), Lemma \ref{lemma:result:security-chain} can then be used to promote the analysis to the real state $\rho_{TSABE}$ (we set the right-hand side of Equation \ref{eq:result:lemma-chain} to $\epsilon + 4\sqrt{\epsilon_\delta^{cl}}$ and solve for $\ell$ giving us Equation \ref{eq:keyratefunction} as desired).  Of course, we must deduct $\leakEC$ from the final entropy bound as Eve has this information; this completes the proof.
\end{proof}


\section{Evaluation}
We now evaluate Equation \ref{eq:keyratefunction} and compare with prior work in \cite{wyderka2025high} (which also considered asymmetric channels, but used an alternative proof methodology).  To evaluate our bound, we take as input the desired security level $\epsilon$ (which we take to be $10^{-12}$).  We then find a suitable $\delta$ such that the overall error probability (taking into account our bound on $\epsilon_\delta^{cl}$ from Equation \ref{eq:mainresult:sample-failure-pr} and our Theorem \ref{thm:mainresult:main-thm}), is no greater than the given overall security level $\epsilon$.  Note that $\epsilon^{cl}_\delta$ is a minimization over two parameters.  To make this tractable, we use values for $\delta_1$ and $\beta$ that seem to produce good results for the cases we evaluate.  We comment that alternative values may be found which can only improve our results presented here.

More formally, given the overall desired security level of $\epsilon_{sec} = \epsilon + 4\sqrt{\epsilon^{cl}_\delta}$ (such that the final secret key is $\epsilon_{sec}$-secure according to Definition \ref{def:secure}), with some user specified $\epsilon_{sec} > \epsilon > 0$, we must find the smallest $\delta$ such that $\epsilon + 4(d+1)(d-1)\sqrt{\epsilon_\delta^{j,c}} \le \epsilon_{sec}$, where $\epsilon_\delta^{j,c}$ is from Equation \ref{eq:mainresult:sampling-main-error}.
In our evaluations, we use $\epsilon = 10^{-14}$ and $\epsilon_{sec} = 10^{-12}$.
To simplify the computation of $\epsilon_\delta^{j,c}$, we use $\beta = 1/d^2$ and $\delta_1 = \delta\cdot c$, for some constant $c\in(0,1)$ to be determined.  Note that $\epsilon_\delta^{j,c}$ is an expression of three terms, only two of which depend on $\delta$.  Based on which term dominates, then we can solve for $\delta$ by taking $\epsilon^{j,c}_\delta$ to be simply three times the dominating term, then solving for $\delta$ directly.  Note, future work in performing an optimization can only improve the results we present here, as smaller $\delta$'s may be found (thus higher key-rates as $\delta$ determines the confidence interval on our sampling set).

We note the third term in $\epsilon_\delta^{j,c}$ does not depend on $\delta$ and can always me made smaller than the other two which will increase as $\delta$ decreases. 
Let the first two terms of $\epsilon_\delta^{j,c}$ be represented by $X$ and $Y$, respectively. Then, we find $X < Y$ whenever
$    c < c_\gamma:= \left(\sqrt{\frac{m}{m+2}\frac{N+2}{N}\left(\frac{1}{d+1}-\frac{1}{d^2}\right)}+1\right)^{-1}$
Thus, we set $\delta = \delta_{min}$:
\begin{equation}
    \delta_{min} \geq
    \begin{cases} 
        \sqrt{-\frac{\chi_d}{(1-c)^2}\frac{N+2}{Nm}} & c_d \geq c_\gamma \\\\
        \sqrt{-\frac{\chi_d}{c^2}\frac{m+2}{m^2}\left(\frac{1}{d+1}-\frac{1}{d^2}\right)^{-1}} & c_d < c_\gamma 
     \end{cases}
   \end{equation}
where $\chi_d =\ln\left((\epsilon_{sec} - \epsilon)^2/(96(d+1)(d-1))\right)$.    Then it will hold that $\epsilon + 4\sqrt{\epsilon_{\delta}^{cl}} = \epsilon_{sec}$.
This will allow us to evaluate our key-rate bound rapidly.  In particular, we evaluate $\texttt{rate} = r = \ell/N$ using our Theorem \ref{thm:mainresult:main-thm} and compare to Prior Work in \cite{wyderka2025high}.  For both cases, we optimize over the sample size $m$ to maximize the key-rates.

\textbf{Symmetric Noise Results: }
We first evaluate our result on symmetric channels, assuming a depolarization channel.  Under such a channel, we would expect $\hat{Q}^j_c = Q/(d-1)$ for all $c \ne 0$ and for all bases $j$, where $Q$ is the depolarization probability.  We compare with results in \cite{wyderka2025high}, which, to our knowledge, represents the best known key-rate bound which is capable of handling arbitrary (symmetric and asymmetric) channels.

As shown in Figure \ref{fig:RvsN}, we see that our results outperform prior work \cite{wyderka2025high} whenever the number of signals, $N$, is low or the dimension is high.  For other cases, prior work outperforms.  Our result perform better at lower signals or higher dimensions due to the fact that we do not rely on AEP or post selection techniques as in prior work; for higher signals, our sampling proof strategy is not as optimal as the one in \cite{wyderka2025high} and so we take a performance hit (as we must raise the error probability to the $1/2$ power, causing $\delta$ to increase, which is an artifact of our proof technique).

We show in Figure \ref{fig:SymmetricPlots} that our result is more tolerant of noise by comparison at higher $d$ and lower $N$, and that difference decreases as $N$ increases, due to the two functions being asymptotically equivalent.

\textbf{Asymmetric Noise Results: }
We now test asymmetric noise, namely when the error rate is different across different bases.  We assume, for any $j$, that $\hat{Q}^j_c = \hat{Q}^j_{c'}$ for $c\ne c'$ and both non-zero; however, now we have $\hat{Q}^j_c \ne \hat{Q}^{j'}_c$ for $j \ne j'$. 
Let $\hat{Q}^\beta$ be the independent variable (i.e., the basis with varying noise), while $\hat{Q}^x$ is held constant for all $x\neq\beta$.
We've found that our function behaves differently depending on whether $\beta=0$, $\beta=1$, or $\beta>1$, as shown in Figure \ref{fig:RvsAsymmQ_variedBasis}.


As in the symmetric case, we see that our result tends to outperform prior work at lower $N$ and higher $d$. However, prior work begins to outperform ours in other scenarios.  However, since both are lower bounds, users may simply take the max of our result and prior work to find a good bound on the key-rate in all scenarios.

Finally, and interestingly, we also observe similar behavior in the asymmetric case that was discovered in \cite{murta2020key} for the two dimensional case.  In particular, we note that increased noise in some bases do not always lead to a smaller key-rate.  Our evaluations, here, generalize the findings in \cite{murta2020key} to higher dimension.

\begin{figure}
    \centering
    \includegraphics[width=0.95\linewidth]{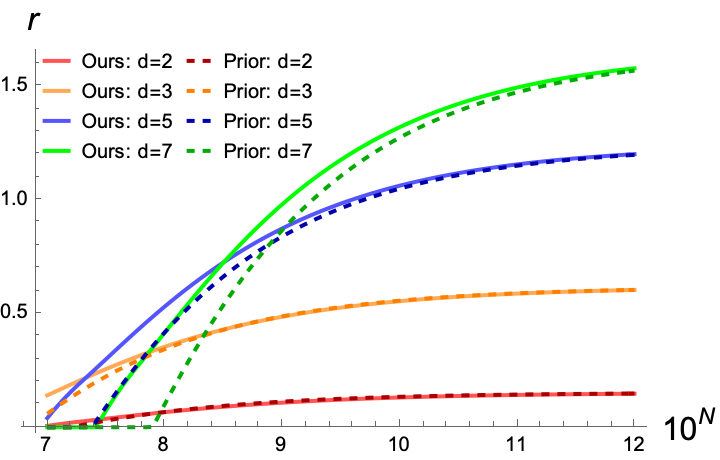}
    \caption{Keyrate vs. number of signals $N$ for various $d$, setting $Q=0.1$ in the symmetric noise case, comparing our new keyrate function with that in prior work \cite{wyderka2025high}. 
    We note that our function can produce a positive keyrate at lower signals than prior work can.}
    \label{fig:RvsN}
\end{figure}

\begin{figure}
    \centering
    \includegraphics[width=0.5\linewidth]{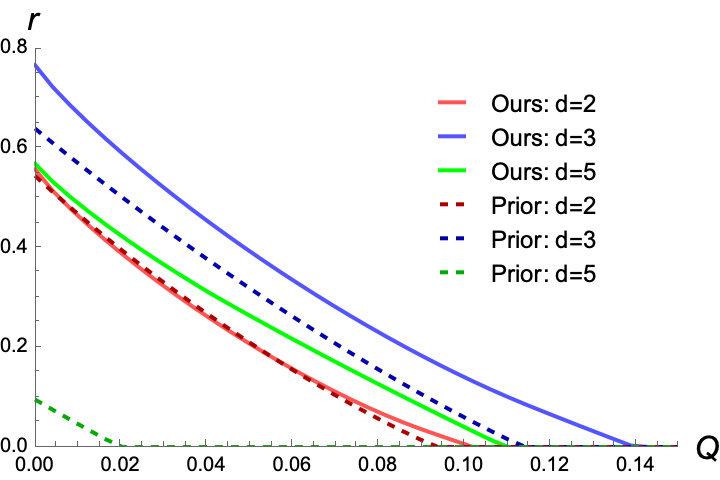}%
    \includegraphics[width=0.5\linewidth]{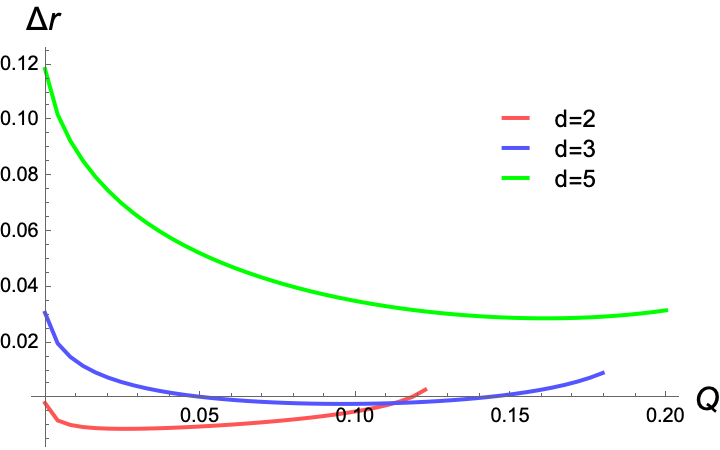}
    \caption{Keyrate vs. Noise (symmetric case $Q$) at various dimensions. 
    Left: Key-rates when $N=10^7$. Right: Difference between our key-rate and prior work when $N=10^9$ (positive values imply ours outperforms while negative values implies prior work outperforms).
    On the left plot, we see our work is more tolerant of noise at this setting of $N$.
    On the right plot, we see there are settings where prior work outperforms our work.  Since both results are lower bounds, however, users may simply take the maximum of our key-rate and prior work.
    }
    \label{fig:SymmetricPlots}
\end{figure}

\begin{figure}
    \centering
    \includegraphics[width=0.5\linewidth]{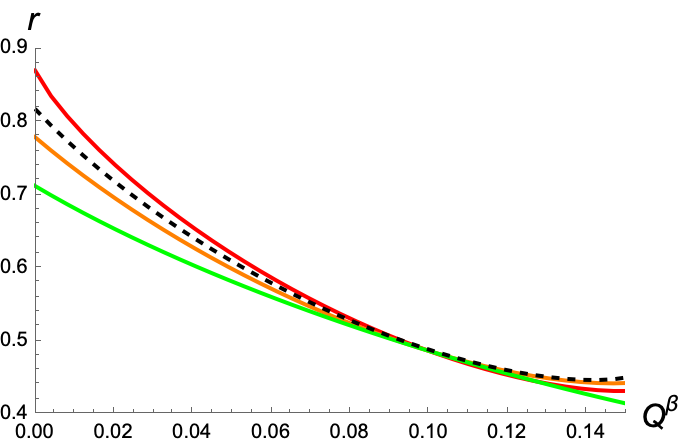}%
    \includegraphics[width=0.5\linewidth]{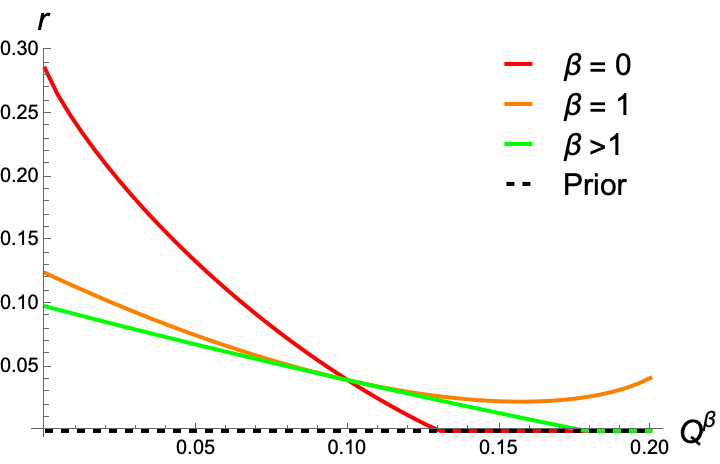}
    \caption{Keyrate vs. Noise in the asymmetric case. 
      The left graph uses $d=3$, $N=10^9$. The right graph uses $d=5$, $N=10^7$.  We vary the noise in a single basis (the $\beta$ basis) while keeping all other basis noise fixed at $10\%$.
    Note that in the right plot, the prior work key-rate evaluates to zero, implying users should abort.  Also note the slight increase in key-rate as the noise increases for the $\beta=1$ case, which follows a trend discovered for the two-dimensional case, in \cite{murta2020key}.
    }
    \label{fig:RvsAsymmQ_variedBasis}
\end{figure}



\section{Closing Remarks}
In this paper, we derived a novel proof of security for the HD-QKD protocol utilizing $d+1$ MUBs originally proposed in \cite{cerf2002security}.  Our proof of security can handle any arbitrary, general, attack and channel including  asymmetric noise.  Our proof bounds the min entropy directly, without having to rely on AEP or post selection methods as discussed in the Introduction.  We compared to prior work in \cite{wyderka2025high} (which to our knowledge is the best known key-rate bound for this protocol that is able to handle arbitrary, asymmetric, channels and attacks). Our evaluations showed that our result can outperform when the number of signals is low or the dimension is high; otherwise prior work outperforms our work.  However, since both are lower-bounds, users may take the maximum of both to derive a final key rate, and thus these two papers complement one another.  Note that both results converge to one another asymptotically.

Many interesting future problems remain open.  Deriving a general entropic uncertainty relation for multiple bases would be very useful and our proof method may help with this.  Analyzing multi-qubit sources and lossy channels would also be an important next step.

\textbf{Acknowledgments:} The authors would like to acknowledge support from the NSF under grant number 2143644.

\balance

\end{document}